\newtheorem{theorem}{Theorem}
\newtheorem{lemma}{Lemma}
\newtheorem{definition}{Definition}
\newcommand{\R}{\mathbb{R}}
\newcommand{\T}{\mathcal{T}}
\newcommand{\setp}{P}
\newcommand{\F}{\mathcal{F}}
\newcommand{\Conv}{\textup{CH}}
\renewcommand{\SS}{\mathcal{S}}
\newcommand{\CDTA}{\overline{\triangle}}
\title{Counting Triangulations and other Crossing-free Structures via Onion Layers}
\author{Victor Alvarez\thanks{Fachrichtung Informatik, Universit\"{a}t des Saarlandes, {\tt{alvarez@cs.uni-saarland.de}}. Partially Supported by CONACYT-DAAD of M\'{e}xico.} \and Karl Bringmann\thanks{Max-Planck-Institut f\"ur Informatik, {\tt{kbringma@mpi-inf.mpg.de}}.} \and Radu Curticapean\thanks{Fachrichtung Informatik, Universit\"{a}t des Saarlandes, {\tt{curticapean@cs.uni-saarland.de}}} \and Saurabh Ray\thanks{Max-Planck-Institut f\"ur Informatik. {\tt{saurabh@mpi-inf.mpg.de}}.}}
\date{\today}
\begin{document}
\maketitle

\begin{abstract}
Let $\setp$ be a set of $n$ points in the plane. A crossing-free structure on $\setp$ is a plane graph with vertex set $P$. Examples of crossing-free structures include triangulations of $P$, spanning cycles of $P$, also known as polygonalizations of $P$, among others. There has been a large amount of research trying to bound the number of such structures. In particular, bounding the number of (straight-edge) triangulations spanned by $P$ has received considerable attention. It is currently known that \emph{every} set of $n$ points has at most $O(30^{n})$ and at least $\Omega(2.43^{n})$ triangulations. However, much less is known about the algorithmic problem of counting crossing-free structures of a given set $P$. For example, no algorithm for counting triangulations is known that, on all instances, performs faster than enumerating all triangulations. 

In this paper we develop a general technique for computing the number of crossing-free structures of an input set $P$. We apply the technique to obtain algorithms for computing the number of triangulations, matchings, and spanning cycles of $P$. The running time of our algorithms is upper bounded by $n^{O(k)}$, where $k$ is the number of \emph{onion layers} of $P$. In particular, for $k = O(1)$ our algorithms run in polynomial time. In addition, we show that our algorithm for counting triangulations is \emph{never} slower than $O^{*}(3.1414^{n})$, even when $k = \Theta(n)$. Given that there are several well-studied configurations of points with at least $\Omega(3.464^{n})$ triangulations, and some even with $\Omega(8^{n})$ triangulations, our algorithm asymptotically outperforms  any  enumeration algorithm for such instances, and it has better worst-case behavior than the recent algorithm shown in~\cite{sweep-line}, which also beats enumeration in those instances. In fact, it is widely believed that any set of $n$ points must have at least $\Omega(3.464^{n})$ triangulations. If this is true, then our algorithm is strictly sub-linear in the number of triangulations counted. We also show that our techniques are general enough to solve the {\sc Restricted-Triangulation-Counting-Problem}, which we prove to be $W[2]$-hard in the parameter $k$. This implies a ``no free lunch'' result: In order to be fixed-parameter tractable, our general algorithm must rely on additional properties that are specific to the considered class of structures.
\end{abstract}

\section{Introduction}

Let $\setp\subset\R^{2}$ be a set of $n$ points. A crossing-free structure on $\setp$ is a plane graph whose vertex set is precisely $\setp$. Examples of such crossing-free structures, whose proper definition will be given later on, are triangulations, spanning cycles, matchings, spanning trees, etc. Thus one can naturally ask: (\oldstylenums{1}) What are upper and lower bounds on the number of such structures over all possible sets of $n$ points on the plane? or (\oldstylenums{2}) Given $\setp$, how can the number of such geometric objects be computed? The search for bounds, the first question, has spawned a large amount of research over almost 30 years, starting with an upper bound of $10^{13n}$ on the number of crossing-free graphs on every set of $n$ points, see~\cite{Ajtai19829}. This bound implies that the size of \emph{each} class of crossing-free structures can be upper-bounded by $c^{n}$, with $c \in \mathbb{R}$ depending on the particular class. Since then, research has focused on tightening the upper and lower bounds on $c$. For example, in the case of spanning cycles, it is currently known that $c \leq 54.55$, see~\cite{DBLP:journals/corr/abs-1109-5596}, and a configuration where $c\geq 4.65$ is known, see~\cite{DBLP:journals/comgeo/GarciaNT00}. Thus, every set of $n$ points has at most $O(54.55^n)$ spanning cycles. For triangulations, \cite{DBLP:journals/combinatorics/SharirS11} provides the bound $c \leq 30$, and \cite{DBLP:journals/jct/SharirSW11} provides $c \geq 2.4$. The interested reader can visit~\cite{asheffer,erikdemaine} for an up-to-date list of bounds on other classes of crossing-free structures. The references therein give a good account of all listed bounds. 
	
	The second question on crossing-free structures, which was mentioned above, is of algorithmic flavor since we consider the problem of \emph{computing} the number of crossing-free structures of a particular class for a \emph{given} input set $\setp$. This problem is closely related to that of sampling crossing-free structures of the class uniformly at random. That is, if $\setp$ spans, say $t$ spanning cycles, we want to sample every spanning cycle with probability $1/t$. A first approach to the counting problem would be to produce \emph{all} elements of the class, using methods for enumeration (see e.g. \cite{DBLP:journals/dcg/KatohT09}), and then simply count the number of elements. This has the obvious disadvantage that the total time spent will be, at best, linear in the number of elements counted. By the first part, this number is in general exponential in the input size.  Thus an important question is whether we can count crossing-free structures of a given class in time sub-linear in the number of elements counted. Only for the super class of \emph{all} plane graphs of $\setp$ this is known to be \emph{always} possible, see~\cite{DBLP:conf/birthday/RazenW11}, while for the class of all triangulations it is known to be \emph{sometimes} possible using a recent algorithm shown in~\cite{sweep-line}. There are nevertheless other algorithms to count triangulations that are reported to be faster than enumeration~\cite{ray-seidel,DBLP:conf/compgeom/Aichholzer99}, but that have no theoretical runtime guarantees.

\section{Our contribution}\label{c-tri:sections:contribution}

Although the algorithm to count triangulations presented in~\cite{sweep-line} could potentially \emph{always} do the counting in sub-linear time in the number of elements counted, and thus beat brute force enumeration, its running time might still be pretty large. This is because its running time depends linearly on the number of T-paths that the algorithm encounters during its execution, and there are configurations having at least $\Omega(4^{n})$ T-paths, see~\cite{DBLP:journals/comgeo/DumitrescuGPW01}. We suggest the paper~\cite{DBLP:conf/compgeom/Aichholzer99} by O. Aichholzer, where T-paths were first introduced, to find out more about them.

In this paper we present yet another new algorithm to count triangulations. Along with this new algorithm we also present algorithms to count the elements of the classes of spanning cycles and matchings of $\setp$ respectively. It is important to keep in mind that, so far, no algorithm is known that \emph{always} beats enumeration on those classes. 

In order to state the results we present in this paper we need the following definitions:

\begin{definition}
	Let $S$ be a crossing-free structure on $\setp$. Thus:
	\begin{itemize}
		\item $S$ is called a triangulation of $\setp$ if and only if the boundary of the unbounded face of $S$ coincides with the convex hull, $\Conv(\setp)$, of $\setp$, and \emph{every} bounded face of $S$ is an empty triangle.
		\item $S$ is called a matching of $\setp$ if and only if \emph{every} vertex of $S$ has degree at most one.
		\item $S$ is called a spanning cycle of $\setp$ if and only if $S$ is single simple polygon with $n$ sides whose vertex set is $\setp$.
	\end{itemize}
\end{definition}

We will denote by $\F_{T}(\setp), \F_{M}(\setp)$ and $\F_{C}(\setp)$ the class of \emph{all} triangulations, matchings, and spanning cycles of $\setp$ respectively.

\begin{definition}[Onion layers]\label{c-tri:def:onion-layers}
	Let $\setp$ be a set of $n$ points on the plane and let $\Conv(\setp)$ denote its convex hull. We define the \emph{onion layers} of $\setp$ as follows: The first onion layer $\setp^{(1)}$ of $\setp$ is $\Conv(\setp)$. For $i > 1$, the $i$-th onion layer $\setp^{(i)}$ of $\setp$ is defined inductively as $\Conv\left(\setp\setminus\bigcup_{j = 1}^{i-1} \setp^{(j)}\right)$. By ``number of onion layers of $\setp$'' we mean the number of \emph{non-empty} onion layers of $\setp$.
\end{definition}

Observe that the number of onion layers of \emph{any} non-degenerate set of $n$ points is \emph{always} at most $\left\lceil\frac{n}{3}\right\rceil$. We are now able to state our results.

\subsection{The result on counting triangulations}

\begin{theorem}\label{c-tri:theorems:triple-paths}
	Let $\setp$ be as before and let $k$ be its number of onion layers. Then the exact value of $|\F_{T}(\setp)|$ can be computed in time $O\left(k^{2}\cdot g\left(\frac{n}{k}\right)^{n}\right)$, where $g(x) = \left(\frac{x^{3} + 3x^{2} + 2x + 2}{2}\right)^{\frac{1}{x}}$. Since $k~\leq~\left\lceil\frac{n}{3}\right\rceil$, this bound never exceeds $O^{*}(3.1414^{n})$. This running time can alternatively be bounded by $n^{O(k)}$, which is polynomial for constant $k$.
\end{theorem}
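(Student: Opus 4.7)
The plan is to set up a dynamic program over the $k$ onion layers of $\setp$, processing them from the outermost layer $\setp^{(1)}$ inward. At each layer $\setp^{(i)}$, the DP state is a compact encoding of how the final triangulation behaves in the annular region bounded on the outside by $\setp^{(i)}$ and on the inside by $\setp^{(i+1)}$. Once this ``interface'' at layer $i$ is fixed, the triangulation decouples into two independent parts: the outer part (outside $\setp^{(i+1)}$) is essentially determined by the interface together with the outer interface, and the inner part (inside $\setp^{(i+1)}$) can be counted by recursing on the remaining inner layers. The total value of $|\F_T(\setp)|$ is then obtained as a sum, over all valid choices of interface at each layer, of products of subproblem counts.

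First I would pin down what information at layer $i$ suffices to decouple the outer and inner counting problems. Intuitively, the only triangles of a triangulation $T\in\F_T(\setp)$ that are visible to both sides are those with at least one vertex on $\setp^{(i)}$ and at least one vertex strictly inside $\setp^{(i)}$. I would prove a structural lemma showing that the set of such ``bridge'' triangles, together with the relevant boundary data on $\setp^{(i)}$, can be encoded as a cyclic sequence of triples of points of $\setp^{(i)}\cup\setp^{(i+1)}$ — the \emph{triple-paths} to which the theorem's label refers — satisfying local compatibility conditions between adjacent triples. Counting these sequences: for a layer with $m := n_i$ vertices, the number of syntactically valid interfaces should work out to $(m^{3}+3m^{2}+2m+2)/2$, which by definition equals $g(m)^{m}$; the leading $m^{3}$ term accounts for the choice of a single triple and the lower-order terms for degenerate boundary cases.

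Second, I would plug the interface bound into the recurrence. At layer $i$ the algorithm iterates over at most $g(n_i)^{n_i}$ interfaces, and for each interface it recurses on the inner layers while accounting for the outer annulus in polynomial time. The resulting total running time is bounded, up to polynomial factors, by $\prod_{i=1}^{k} g(n_i)^{n_i}$. Subject to $\sum_i n_i = n$ with $k$ fixed, this product is maximized when all $n_i$ are equal to $n/k$, by a convexity argument on $x\mapsto x\log g(x)$, yielding the bound $O\bigl(k^{2}\cdot g(n/k)^{n}\bigr)$. The alternative $n^{O(k)}$ bound follows from the cruder estimate $g(n_i)^{n_i} = O(n_i^{3})$ per layer, and the universal worst-case bound $O^{*}(3.1414^{n})$ follows by numerically minimizing $g(n/k)^{n}$ over $k \in \{1,\dots,\lceil n/3 \rceil\}$.

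The main obstacle will be the structural lemma: proving both that the triple-path data at each layer is sufficient to promise the inner subproblem everything it needs to know about the outer part, and that local compatibility of adjacent triples implies global realizability by some genuine triangulation, testable in polynomial time. Edge cases that could blow up the state space — triangles straddling three consecutive layers, interior points connected to faraway layer vertices, or bridge triangles sharing a vertex lying strictly inside $\setp^{(i)}$ — all need to be reabsorbed into the triple-path encoding without increasing its asymptotic count beyond $(m^{3}+3m^{2}+2m+2)/2$. A secondary task, routine by comparison, is checking that $x\mapsto x\log g(x)$ is convex on the relevant range so that equal-size layers are indeed the extremal configuration; this should follow from elementary calculus applied to the closed-form expression for $g$.
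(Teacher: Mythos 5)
Your plan hinges on a structural claim that is false: that the ``interface'' between consecutive onion layers --- the data a triangulation exposes across the annulus between $\setp^{(i)}$ and $\setp^{(i+1)}$ --- can be encoded by only $(m^{3}+3m^{2}+2m+2)/2$ states with $m=n_i$. The set of bridge triangles (or, equivalently, the set of edges crossing between a layer and the points inside it) typically has size $\Theta(n_i+n_{i+1})$, and the number of possible such configurations is exponential in the layer sizes, not cubic; already for an annulus between two convex polygons the number of ways to triangulate it is exponential. The straddling-triangle and layer-skipping edge cases you flag at the end are not reabsorbable corner cases but symptoms of this: no layer-by-layer DP with a per-layer state space of polynomial size can carry enough information to decouple the outer and inner counting problems, so the central lemma of your proposal cannot be proved. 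You have also misread where the cubic polynomial comes from: it is not a count of interfaces.

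The paper's decomposition is radial rather than annular. For a triangulation $T$ and a point $p$, one follows the smallest-labelled neighbour repeatedly to get an ``sn-path'' descending through the layers (at most one vertex per layer); the algorithm recurses on sn-regions bounded by two such descending paths plus hull arcs, enumerating the apex of the triangle on the region's defining edge together with a descending path from it, and memoizes over sn-regions. The running time is bounded by the number of edges of the call graph, and each such edge is charged to a triple of descending paths; since the paths are non-crossing and each uses at most one vertex per layer, the number of ways a vertex-disjoint triple meets a layer of size $n_i$ is exactly $1+3n_i+6\binom{n_i}{2}+3\binom{n_i}{3}=f(n_i)=\frac{n_i^{3}+3n_i^{2}+2n_i+2}{2}$ (the factor $3$ rather than $6$ in the last term uses the fixed cyclic order of non-crossing paths), giving $\prod_i f(n_i)$ triples and an extra $O(k^{2})$ factor for recording merge points. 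Two smaller slips in your write-up, which would matter even if the structural lemma held: you need \emph{log-concavity} of $f$ (not convexity of $x\log g(x)$) for equal-sized layers to maximize $\prod_i f(n_i)$ under $\sum_i n_i=n$; and the universal bound is obtained not by optimizing over $k$ numerically but from the facts that $g$ is decreasing for $x\geq 1$ and every layer except possibly the innermost has at least three points, so $U=O(g(3)^{n})=O^{*}(31^{n/3})=O^{*}(3.1414^{n})$.
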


The algorithm of the previous theorem has better worst-case behavior than the one presented in~\cite{sweep-line}, which is $O^{*}\left(9^{n}\right)$. Moreover, it has other nice properties:

\begin{itemize}
	\item It is the \emph{first} algorithm to be known that can compute the exact value of $|\F_{T}(\setp)|$ in \emph{polynomial time} in at least some non-trivial cases.
	\item As stated before, for \emph{every} set of $n$ points, the size of $\F_{T}(\setp)$ can be lower-bounded by $\Omega(2.4^{n})$, but it is widely believed that this bound can be improved to $\Omega\left(\sqrt{12}^{n}\right)\approx\Omega(3.464^{n})$. If this stronger bound is true, then the algorithm of Theorem~\ref{c-tri:theorems:triple-paths} would always count triangulations in time $O^{*}(3.1414^{n}) = o(|\F_{T}(\setp)|)$. Thus setting in the positive the answer of whether or not one can \emph{always} count triangulations of set of points faster than enumerating them.
\end{itemize}

\subsection{The results on counting other crossing-free structures}

Moving away from triangulations, the other result that will be proven is the following:

\begin{theorem}\label{c-tri:theorems:triangular-paths}
	Let $\setp$ be as before and let $k$ be its number of onion layers. Then the exact values of $|\F_{M}(\setp)|$ and $|\F_{C}(\setp)|$ can be computed in $n^{O(k)}$ time.
\end{theorem}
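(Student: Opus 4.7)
The plan is to adapt the onion-layer decomposition underlying Theorem~\ref{c-tri:theorems:triple-paths} to the classes $\F_{M}(\setp)$ and $\F_{C}(\setp)$. The common ingredient will be a dynamic program whose states are parameterized by a ``separator'' that interacts with each onion layer in a constant number of points and is therefore described by $O(k)$ parameters. Because the structures in question are crossing-free, both the number of admissible separator positions and the number of compatible profiles at a fixed position can be bounded by $n^{O(k)}$.

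Concretely, I would first define a separator $\sigma$ that enters each layer $\setp^{(i)}$ a constant number of times, so that it visits $O(k)$ distinguished sites in total. Sliding $\sigma$ through admissible positions enumerates partitions of $\setp$ into a ``processed'' part $\setp_L$ and a ``remaining'' part $\setp_R$; there are at most $n^{O(k)}$ such positions. For a fixed $\sigma$, I would define the \emph{profile} as the list of edges of the structure that cross $\sigma$, augmented, in the spanning-cycle case, by the non-crossing pairing of these crossings induced by the sub-paths that the cycle contributes on the processed side. Since the number of edges of a crossing-free structure meeting $\sigma$ in any fixed onion layer is $O(1)$, a profile is determined by $O(k)$ local specifications, each taking $n^{O(1)}$ values, so there are $n^{O(k)}$ profiles.

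The dynamic program then fills, for every separator position and every profile, the number of partial structures on $\setp_L$ that realize that profile. Transitions move $\sigma$ so as to incorporate one more point of $\setp$ at a time, updating the profile by adjusting the edges incident to the swept-over point at $n^{O(1)}$ cost each. Summing, over all compatible left/right profile pairs, the product of the two DP entries yields $|\F_{M}(\setp)|$ and $|\F_{C}(\setp)|$. The total running time is $n^{O(k)}$, as claimed.

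The hard part will be the spanning-cycle case, where the global connectivity constraint forces the profile to encode not only which crossings occur but also a non-crossing pairing of them by the sub-paths already constructed on the processed side, together with enough bookkeeping to prevent premature cycle closure inside $\setp_L$ alone. The non-crossing property still restricts these pairings to Catalan-many per layer, which keeps the state count within $n^{O(k)}$; the delicate step is verifying that this local information suffices, i.e.\ that the final merge across $\sigma$ can decide whether the glued structure is a single simple polygon on $\setp$ rather than a disjoint union of polygons, using only profile data and without any enumeration of the underlying partial structures.
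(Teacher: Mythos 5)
There is a genuine gap, and it is located exactly where your state-count bound comes from. You assert that ``the number of edges of a crossing-free structure meeting $\sigma$ in any fixed onion layer is $O(1)$,'' and from this you conclude that a profile is described by $O(k)$ local specifications, hence $n^{O(k)}$ states. This is false: being crossing-free bounds how structure edges interact with \emph{each other}, not how many of them cross a sweep curve. Already for $k=2$ (two concentric circles), take a separator segment running radially from an inner-layer point to an outer-layer vertex $q$; the nested chords $p_{-i}p_{i}$ joining outer-layer points on either side of $q$ are pairwise non-crossing, every one of them crosses the segment, and there can be $\Theta(n)$ of them. Moreover, the set of such crossing edges can be chosen in exponentially many ways (any nested matching between the two groups), so at a single separator position the number of distinct profiles is exponential in $n$ even for constant $k$. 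The same explosion occurs for spanning cycles. So the dynamic program you describe does not have $n^{O(k)}$ states, and the claimed running time does not follow; the secondary issue you flag yourself (certifying that the glued object is a \emph{single} cycle from profile data alone) is also left unresolved, although with full pairing information in the profile it could in principle be checked at the final merge --- if the profiles were small, which they are not.

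The paper avoids this trap by never letting structure edges cross the separator at all. Each matching $M$ (or rooted, oriented spanning cycle $H$) is embedded into its \emph{unique} constrained Delaunay triangulation $\triangle^{M}$ (uniqueness after a perturbation ensuring no four cocircular points), and what is counted are annotated CDTs: vertices carry the identity of their matched partner (respectively the triple $(\text{pos}_v,\text{prev}_v,\text{next}_v)$ along the cycle) and edges carry a membership bit. The separators are \emph{triangular paths} --- chains of $O(k)$ triangles of the triangulation itself --- so they are never crossed by edges of the structure, they admit only $n^{O(k)}$ choices and $n^{O(k)}$ annotations, and no separator edge lies on the boundary of two sub-problems, which makes the Delaunay flippability check local. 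The unique-CDT embedding also handles a problem your direct DP happens to sidestep but which kills the naive divide-and-conquer (a matching lies in many triangulations and would be counted once per separator), and the $1,\dots,n$ numbering along the rooted, oriented cycle is precisely the device that makes single-cycle connectivity a locally checkable condition --- the point you leave as the ``delicate step.'' To repair your argument you would need either a proof that profiles can be compressed to $O(k)$ data (which the example above rules out for the profile you define), or a canonicalization device of the CDT type.
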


Thus again, as long as $k = O(1)$, the algorithms of Theorem~\ref{c-tri:theorems:triangular-paths} compute the said numbers in polynomial time, which then gives a partial answer to Problem 16 of The Open Problems Project, which asks whether $|\F_{C}(\setp)|$ can \emph{always} be computed in polynomial time, see~\cite{topp}. This time, however, we are not able to prove a running time of the sort $c^{n}$ for large $k$, like in Theorem~\ref{c-tri:theorems:triple-paths}.

The general layout of the algorithms of Theorems~\ref{c-tri:theorems:triple-paths} and~\ref{c-tri:theorems:triangular-paths} is similar to the one found in~\cite{DBLP:journals/comgeo/AnagnostouC93}, where these ideas have been used for optimization problems.

\subsection{A hardness result}

Observe that the running times of the algorithms of Theorems~\ref{c-tri:theorems:triple-paths} and~\ref{c-tri:theorems:triangular-paths} can be stated as $n^{f(k)}$, for some function $f$ that does not depend on $n$. With regard to parameterized complexity it is natural to ask if these running times can be improved  to something of the sort $g(k)\cdot n^{O(1)}$, for some function $g$ independent of $n$, thus proving that our problems belong to the FPT complexity class. Which is the class of fixed-parameter tractable problems. However, the techniques involved in the algorithms of Theorems~\ref{c-tri:theorems:triple-paths} and~\ref{c-tri:theorems:triangular-paths} are general enough to solve harder problems, such as the following:

{\sc Restricted-Triangulation-Counting-Problem}\label{RTCP}: Given a set of points $\setp$ and a subset of edges $E$ over $\setp$, count the triangulations of $\setp$ that use only edges from $E$.

We also present the following hardness result:

\begin{theorem}\label{c-tri:theorems:fpt}
	The {\sc Restricted-Triangulation-Counting-Problem} is W[2]-hard if the parameter is considered to be the number of onion layers of $\setp$. This result even holds for the problem of just deciding the \emph{existence} of a restricted triangulation.
\end{theorem}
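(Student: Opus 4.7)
The plan is a parameterized reduction from the canonical W[2]-hard problem \textsc{Dominating Set}, where the parameter is the size $k$ of the sought dominating set. Given an instance $(G,k)$ with $G = (V, E_G)$ and $|V| = n$, I would build in polynomial time a planar point set $\setp$ together with an allowed edge set $E$ so that (i) $\setp$ has $k + O(1)$ onion layers and (ii) $\setp$ admits a triangulation using only edges of $E$ if and only if $G$ has a dominating set of size $k$. Because the reduction targets the existence version, which is a special case of the counting version (a nonzero count witnesses existence), this establishes W[2]-hardness of both versions simultaneously, assuming $\text{FPT} \neq \text{W}[2]$.

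The construction is layered so that each onion layer carries one ``choice''. I would place $k$ nested convex polygons $L_1,\ldots,L_k$, with each $L_i$ containing one marker point $p_{i,v}$ for every $v \in V$, and then enclose a small (constant-layer) \emph{verification region} containing a point $q_v$ for each $v \in V$. The allowed edge set $E$ is designed as follows. Between consecutive choice layers $L_i$ and $L_{i+1}$, the only valid triangulations of the annulus correspond to picking a single $v_i \in V$ whose marker becomes the apex of a fan to $L_{i+1}$; all other trans-annular edges are forbidden. The edges from $L_k$ into the verification region are permitted only between $p_{k,v}$ and $q_u$ when $u \in N[v]$, i.e., when $v$ dominates $u$ in $G$. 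Inside the verification region, each $q_u$ sits in a pocket that can be triangulated with the few allowed local edges only if it receives at least one edge from some $p_{k,v_i}$, that is, only if the multiset $\{v_1,\ldots,v_k\}$ dominates $u$. Thus a valid restricted triangulation exists iff the $k$ choices form a dominating set of $G$.

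Three correctness checks then remain. First, I must verify that the global placement gives exactly $k + O(1)$ onion layers: the radii of the $L_i$ are chosen in strictly decreasing order and the verification region is placed strictly inside $L_k$, so hull peeling proceeds layer by layer and produces precisely the intended onion decomposition. Second, I must prove the \emph{selection lemma}: the chosen restriction on trans-annular edges forces every valid triangulation of the annulus between $L_i$ and $L_{i+1}$ to encode a single vertex choice $v_i$. This will rely on the fact that an annular triangulation corresponds combinatorially to a monotone staircase of diagonals, which collapses to a single fan once all but one marker on $L_i$ is forbidden from having outgoing trans-layer edges. Third, I must prove the \emph{verification lemma}: the pockets surrounding each $q_u$ admit a triangulation with the allowed local edges exactly when $u$ receives at least one permitted edge from the innermost choice layer.

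The main obstacle is designing the selection gadget so that it simultaneously forces a \emph{single} discrete choice per layer and does \emph{not} introduce additional onion layers or admit mixed selections. The edge restriction must be imposed globally through the set $E$, but the forced selection must be local to each layer; this tension will probably require introducing small auxiliary points (with careful perturbations to avoid degeneracies) placed just inside each $L_i$ to realise the fan apex, while keeping these auxiliaries from inflating the onion-layer count beyond $k + O(1)$. Balancing geometric placement against combinatorial edge restrictions is the delicate technical core of the reduction, and handling it cleanly and uniformly for all $k$ layers is where I would expect most of the proof effort to go.
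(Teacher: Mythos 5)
There is a genuine gap, and it sits exactly where the real work of the proof lies: the routing of the $k$ choices to the verification gadgets. In your design the verification region lies innermost and, as you state it, receives permitted edges only from the markers on the innermost layer $L_k$. But the selections $v_1,\ldots,v_{k-1}$ are made in the outer annuli, geometrically separated from the verification region by all the inner layers, and you give no mechanism by which those outer choices become visible at the center. Read literally, your pockets can only test adjacency against whatever single marker of $L_k$ fans inward, so the construction decides whether \emph{one} vertex dominates $G$, not whether $k$ vertices do. The alternative reading -- that the choice made between $L_i$ and $L_{i+1}$ somehow ``activates'' the marker $p_{k,v_i}$ on $L_k$ -- is precisely the unaddressed technical core: you would need each layer to be crossable by a signal carrying an outer choice without perturbing that layer's own selection, i.e.\ a crossing gadget, plus aggregation (OR-type) gadgets so that a pocket is satisfiable when \emph{at least one} of the $k$ transmitted choices dominates $u$. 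The obstacle you do name (forcing a single fan apex per annulus without inflating the onion layers) is the easier half; the propagation-across-layers problem is the one your sketch cannot avoid and does not solve. Note also that letting long edges run directly from an outer marker $p_{i,v}$ to a center point $q_u$ does not rescue the plan, because such edges interact with (and are blocked by) the forced trans-annular structure of every intermediate layer, which is again the same crossing problem in disguise.

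For comparison, the paper's proof reduces from the parameterized \textsc{Hitting-Set} problem (equivalent in W[2] terms to your \textsc{Dominating Set}) but keeps the geometry flat rather than nested: $k$ horizontal pipes each carry one value of $[n]$ via a forced zig-zag, tester gadgets read off whether a pipe carries a value hitting a given set, OR-chains aggregate these booleans, terminal gadgets force satisfaction, and -- crucially -- an explicit crossing gadget lets the boolean wire traverse a pipe without disturbing the value that pipe carries. The onion-layer bound $O(k)$ then follows simply by counting distinct $y$-coordinates. Your nested-layer picture could in principle be made to work, but only after you design and verify analogues of the tester, OR, and crossing gadgets and prove their specifications (the analogue of the paper's Lemma 1); as written, the proposal asserts the conclusion of that machinery without providing it.
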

	
The book by J. Flum and M. Grohe, see~\cite{flum-grohe}, is a standard reference for Parameterized Complexity Theory, where the classes FPT and W[2] are defined. For now, however, it suffices to say that the separation FPT $\neq$ W[2] is widely believed. Thus an algorithm with a running time of the sort $g(k)\cdot n^{O(1)}$ is most likely not attainable for the {\sc Restricted-Triangulation-Counting-Problem}. This might be an indication that we may have to exploit the particular structure of the problems in order to obtain fixed-parameter tractable algorithms for counting crossing-free structures, in the general non-restricted case.

We will divide the rest of the paper as follows: In~\S~\ref{c-tri:sections:framework} we give a rough idea on how our algorithms work. We prove Theorems~\ref{c-tri:theorems:triple-paths}, \ref{c-tri:theorems:triangular-paths}, and~\ref{c-tri:theorems:fpt} in~\S~\ref{c-tri:sections:triple-paths}, \S~\ref{c-tri:sections:triangular-paths}, and~\S~\ref{c-tri:sections:hardness} respectively. In~\S~\ref{c-tri:sections:experiments} we show experiments comparing our algorithm for counting triangulations with the algorithm presented in~\cite{ray-seidel}, which is supposed to be \emph{very} fast in practice. We close the paper in~\S~\ref{c-tri:sections:conclusionsTripleP} with some conclusions.

\section{A general framework for counting crossing-free structures}\label{c-tri:sections:framework}

The overall idea of \emph{all} our algorithms can be roughly described as follows. Suppose we want to count the elements of some particular class $\F$ of crossing-free structures on $\setp$. A set $S$ of non-crossing edges on $\setp$ is called a \emph{separator} if the union of the edges in $S$ splits the interior of $\Conv(\setp)$, possibly along with $\Conv(\setp)$, into at least two regions. In such a case we will say that $S$ splits $\Conv(\setp)$ into those regions. Now assume that there exists a set $\SS$ of separators with the following properties: (\oldstylenums{1}) Every element of $\F$ contains a \emph{unique} separator $S\in\SS$, and (\oldstylenums{2}) we can ``quickly'' enumerate the members of $\SS$. With a set of separators $\SS$, the elements of $\F$ can be counted as follows: For each $S\in\SS$, let $R^S_1$, $R^S_2$, \ldots, $R^S_t$ be the regions $S$ splits $\Conv(\setp)$ into. Recursively compute the number $n^{S}_{i}$ of elements of $\F$ of each region $R^S_i$. The number of elements of $\F$ containing $S$ is then $N^S=\prod_{i=1}^{t} n^{S}_{i}$. Thus the total number of elements of $\F$ is simply $\sum_{S \in \SS} N^S$. Of course, in the recursion, a set of separators is required in each $R^S_i$, and the efficiency of the algorithm depends heavily on the choice of $\SS$. For example, one well-known family of separators $\SS$ for triangulations is the set of T-paths, see~\cite{DBLP:conf/compgeom/Aichholzer99,DBLP:journals/comgeo/DumitrescuGPW01,sweep-line}. We will introduce some other families of separators, some of them with additional properties, however, for the time being we believe that this vague description of how the algorithms work conveys the main idea appropriately.

\section{Counting triangulations using the onion layers}\label{c-tri:sections:triple-paths}

In this section we will present yet another new algorithm for counting triangulations that uses the onion layers of the given set of points $\setp$. 

Now, for any $p\in\setp$, let $\ell(p)$ denote the index of the onion layer to which $p$ belongs. Let us label the points $p\in\setp$ with distinct labels in $\{1,\ldots,n\}$ such that if $\ell(p) < \ell(q)$ then $p$ also receives a label smaller than $q$. This is clearly possible. Figure~\ref{c-tri:sections:t-paths:figs:20} shows the onion layers of a set of $17$ points and the labels assigned to them. From now on we will refer to the points of $\setp$ by their labels \emph{i.e.}, we will think of $\setp$ as the set $\{1,\ldots,n\}$ and when we say ``$p\in\setp$'', we will mean the point with label $p$. 

\begin{figure}[!hbt]
    \begin{center}
	\includegraphics[height=4.2cm]{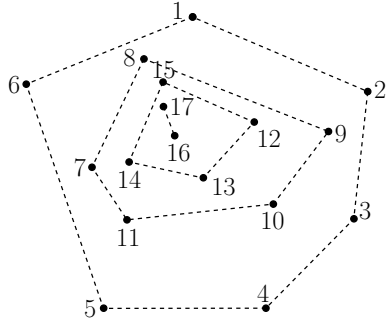}
	\caption{Four onion layers.}
	\label{c-tri:sections:t-paths:figs:20}
	\end{center}
\end{figure}

Let $T$ be any triangulation of $\setp$. For $p \in\setp\setminus\setp^{(1)}$, let $\text{sn}_T(p)$ be the smallest neighbor of $p$ in $T$. Observe that any such point $p$ has at least one neighbor $q$ such that $\ell(q) < \ell(p)$ and therefore $\text{sn}_T(p) < p$. If $p \in\setp^{(1)}$, we set $\text{sn}_T(p)=p$. When $T$ is clear from context, we will just write $\text{sn}(p)$ instead of $\text{sn}_T(p)$. We denote by $\text{sn\textcolor{black}{-path}}_T(p)$ the unique path $p=a_0,a_1,\ldots,a_m$ in $T$ such that for each $0\leq i<m$, we have that $a_{i+1}=\text{sn}(a_i)$ and $\text{sn}(a_m)=a_m$. We will also direct this path from $a_0$ towards $a_m$ and call this the direction of ``descent'' since $\ell(\cdot)$ decreases along the path. Note that any $\text{sn\textcolor{black}{-path}}$ consists of at most one point from each onion layer and precisely one point from the first onion layer.

Let $(p,q)$ be some edge in $T$ and suppose that $\text{sn\textcolor{black}{-path}}(p)$ ends at $p^{\prime}\in\setp^{(1)}$ and $\text{sn\textcolor{black}{-path}}(q)$ ends in $q^{\prime} \in \setp^{(1)}$. There are two paths in $T$ from $p^\prime$ to $q^\prime$ along $\Conv(\setp)$, one in the clockwise direction and the other in the counter-clockwise direction. Each of these paths along with the edge $(p,q)$ and the two $\text{sn\textcolor{black}{-paths}}$ starting at $p$ and $q$ respectively, defines a region within $\Conv(\setp)$. We call these two regions the $\text{sn}$-regions of $(p,q)$. See Figure~\ref{c-tri:sections:t-paths:figs:21}. Given any $\text{sn}$-region $R$, we refer to the number of triangles in any triangulation of $R$ as the \emph{size} of $R$. This is well defined since the number of triangles is the same regardless of the triangulation chosen.

\begin{figure}[!htb]
	\begin{center}
		\begin{minipage}[b][6.5cm][t]{7cm}
			\begin{center}
				\includegraphics[height=4.2cm]{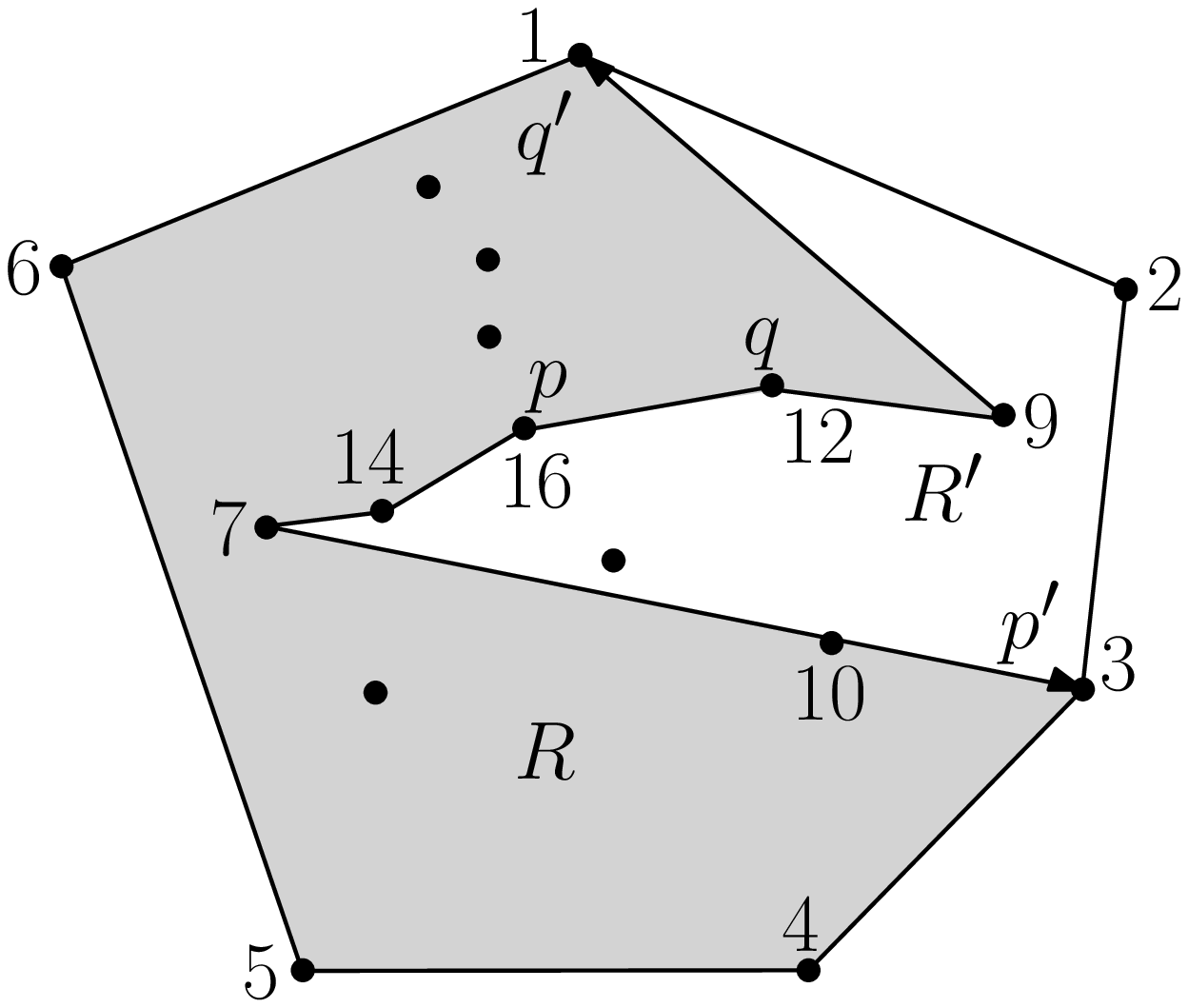}
				\caption{$R$ and $R^\prime$ are the $\text{sn}$-regions of $(p,q)$.}
				\label{c-tri:sections:t-paths:figs:21}
			\end{center}
		\end{minipage}
	\quad
		\begin{minipage}[b][6.5cm][t]{7cm}
			\begin{center}
				\includegraphics[height=4.2cm]{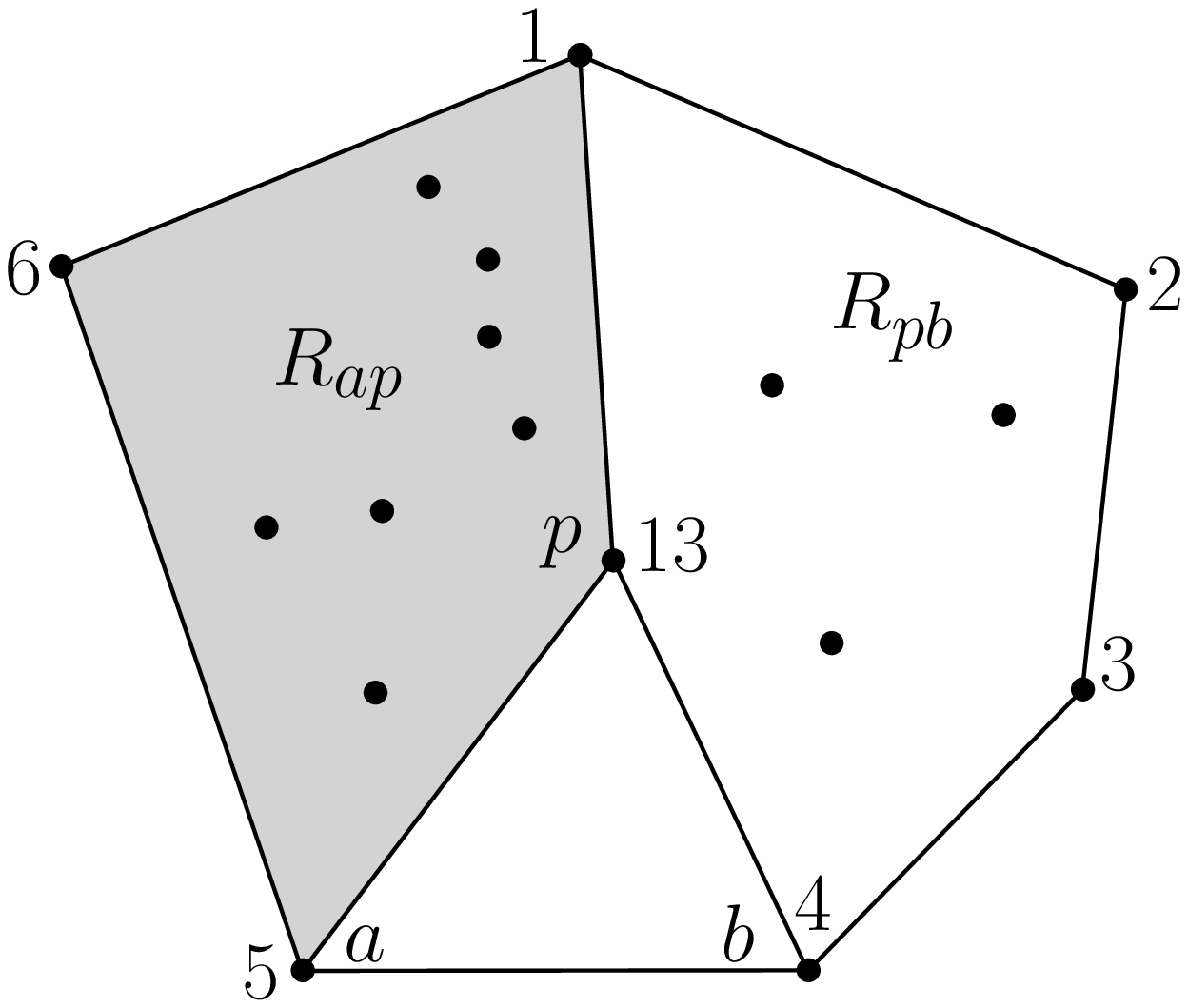}
				\caption{$R_{ap}$ and $R_{pb}$ are the $\text{sn}$-regions of $(a,p)$ and $(p,b)$, respectively, that do not contain triangle $apb$.}
				\label{c-tri:sections:t-paths:figs:22}
			\end{center}
		\end{minipage}
	\end{center}
\end{figure}

Let $ab$ be an edge on $\Conv(\setp)$. Observe that in any triangulation, $\Conv(\setp)$ is one of the $\text{sn}$-regions of $(a,b)$, the other region being empty. In any triangulation $T$ of $\setp$, there is precisely one triangle $apb$ that the edge $ab$ belongs to. Let $R_{ap}$ be the $\text{sn}$-region of $(a,p)$ that does not contain $apb$ and similarly let $R_{pb}$ be the $\text{sn}$-region of $(p,b)$ that does not contain $apb$, see Figure~\ref{c-tri:sections:t-paths:figs:22}.

\subsection{The algorithm}\label{c-tri:sections:triple-paths:algorithm}

Let $ab$ be again an edge on $\Conv(\setp)$. The core idea of our algorithm is as follows: We can easily enumerate all the points $p$ such that the triangle $apb$ appears in some triangulation. This is just the set $Q$ of points $p$ such that the triangle $apb$ is free of other points of $\setp$. For every element $p$ of $Q$, suppose that we can enumerate the $\text{sn}$-paths $\rho$ of $p$ over all triangulations of $\setp$. For ever pair $(p,\rho)$, let $\T_{(p,\rho)} = \T_{(p,\rho)}(\setp)$ be the set of triangulations of $\setp$ that contain the triangle $apb$ and in which $\rho$ is the $\text{sn}$-path of $p$. If, for each such pair that we can obtain, we can compute $|\T_{(p,\rho)}|$, then we are done, since each triangulation of $\setp$ must contain precisely one pair $(p,\rho)$, adding the numbers over all pairs gives us the total number of triangulations. 

Let us fix a pair $(p,\rho)$ for which we would like to compute $|\T_{(p,\rho)}|$. The pair already defines the regions $R_{ap}$ and $R_{pb}$ for all triangulations in $\T_{(p,\rho)}$. Observe that any triangulation in $\T_{(p,\rho)}$ contains a triangulation $T_{ap}$ of $R_{ap}$ and a triangulation $T_{pb}$ of $R_{pb}$, each of which satisfy the following $\text{sn}$-constraint: For each edge $(q,r)$ in $\rho$ there is no edge $(q,s)$ in the triangulation (either $T_{ap}$ or $T_{pb}$) such that $s<r$. Furthermore, putting together any pair of triangulations $T_{ap}$ and $T_{pb}$, each satisfying the constraint, and the triangle $apb$ gives a triangulation in $\T_{(p,\rho)}$. This observation follows from the fact that $\rho$ is an $\text{sn}$-path of $p$ in any triangulation of  $\T_{p,\rho}$, and allows us to separately compute the number of ($\text{sn}$-constraint-satisfying) triangulations $N_{ap}$ of $R_{ap}$ and $N_{pb}$ of $R_{pb}$ whose product gives $|\T_{(p,\rho)}|$.

The numbers $N_{ap}$ and $N_{pb}$ are computed recursively. We will maintain the invariant that at any point in the recursion we are dealing with an $\text{sn}$-region of some edge. This is certainly true in the beginning since we start with an $\text{sn}$-region of the edge $ab$ and also in the next step since we recurse on $\text{sn}$-regions defined by the edges $(a,p)$ and $(p,b)$ respectively. At any point, let us say that we are dealing with an $\text{sn}$-region $R$ defined by an edge $(x,y)$ and let $\rho_x$ and $\rho_y$ be the $\text{sn}$-paths starting at $x$ and $y$ respectively.

Now, we recurse almost exactly as we did before: We enumerate the set of points $z$ such that the triangle $xzy$ lies within $R$ and is free of other points of $\setp$ contained in $R$, see Figure~\ref{c-tri:sections:t-paths:figs:23}. Furthermore, we ensure that if $z$ happens to be a point in either $\rho_x$ or $\rho_y$, and $(z,t)$ is an edge in that $\text{sn}$-path, then both $x$ and $y$ are bigger than $t$. This way, we do not violate the $\text{sn}$-constraint. For each such $z$ we enumerate the portions of $\text{sn}$-paths starting at $z$ that lie within $R$. See Figure~\ref{c-tri:sections:t-paths:figs:23}. Each such path splits the region $R$ into regions $R_{xz}$ and $R_{zy}$ which are $\text{sn}$-regions defined by $(x,z)$ and $(z,y)$ respectively. Each of the regions $R_{xz}$ and $R_{zy}$ have sizes smaller than $R$, \emph{i.e.}, fewer triangles in any triangulation. The recursion bottoms out when the size is $\leq 1$, in which case we know that there is exactly one triangulation. Note that even though we enumerate only the portions of the $\text{sn}$-paths of $z$ that lie within $R$, these portions implicitly define the entire $\text{sn}$-path of $z$. This is because such a portion either ends at a point on the first onion layer in which case it is the entire $\text{sn}$-path, or at a point $w$ on either $\rho_x$ or $\rho_y$. The direction of descent along that $\text{sn}$-path, starting at $w$, is then the remaining portion of the $\text{sn}$-path of $z$.

\begin{figure}[!hbt]
    \begin{center}
	\includegraphics[height=4.2cm]{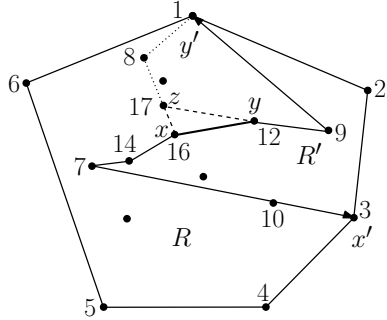}
	\caption{$R$ and $R^\prime$ are the $\text{sn}$-regions of $(x,y)$.}
	\label{c-tri:sections:t-paths:figs:23}
	\end{center}
\end{figure}

One detail is still missing. How do we enumerate the portions of the $\text{sn}$-paths of $z$ that belong to at least one triangulation of $R$? and the answer is: We will not do it. Instead, we enumerate a superset of paths which are \emph{descending} in the sense that they start at $z$ and each successive point is in a strictly upper layer (a layer containing points with smaller indices). Again, we only enumerate the portion of such paths that lie inside $R$ since the rest is implicitly defined. For any descending path that does not really belong to any triangulation of $R$, at least one of the regions $R_{xz}$ or $R_{zy}$ has no triangulations satisfying the $\text{sn}$-constraint. This will be detected somewhere down the recursion where we will not be able find any $z$ satisfying the $\text{sn}$-constraint. At that point, we return $0$ as the number of triangulations. Thus the algorithm still works in these cases. 

There is one other ingredient that we add for efficiency: Memoization. Whenever we compute the number of triangulations of a certain $\text{sn}$-region that satisfy the $\text{sn}$-constraint dictated by the $\text{sn}$-paths defining the region, we store it in a hash table (or any other data structure). Now consider a \emph{call graph} in which each node represents an $\text{sn}$-region and there is a directed edge from a region $R$ to a region $R^\prime$ if from $R$ we make a recursive call to $R^\prime$. The number of egdes in this graph is an upper bound\footnote{Up to a polynomial overhead arising from the construction and handling of sub-problems.} on the running time of the algorithm since, because of memoization, no edge is \emph{traversed} more than once. 

We will now prove an upper bound on the number of edges in the call graph. Each call from a region $R$ to a region $R^\prime$ can be charged to a triple of descending paths - two defining $R$ and a third that, along with a triangle, splits $R$ into two regions, one of which is $R^\prime$. The triples $(\rho_1, \rho_2, \rho_3)$ that are produced in the algorithm have the property that once two paths merge in the direction of descent, they never split again. This is ensured by the fact that we only enumerate the portions of the third descending path within the region $R$ and the rest is implicitly defined, as noted before. Let $\rho^\prime_2$ be the portion of $\rho_2$ that does not have any point in common with $\rho_1$, and let $\rho^\prime_3$ be the portion of $\rho_3$ that does not have any point in common with either $\rho_1$ or $\rho_2$. The descending paths $\rho_1$, $\rho_2^{\prime}$ and $\rho_3^{\prime}$ are vertex disjoint, and along with some additional information they completely describe $\rho_1$, $\rho_2$ and $\rho_3$. The additional information that is required is whether, and where, $\rho_2$ merges with $\rho_1$, and whether, and where, $\rho_3$ merges with one of the other paths. If $\setp$ has $k$ onion layers, then each descending path has length at most $k$ and therefore there are at most $k$ ways that $\rho^\prime_2$ may merge with $\rho_1$, and at most $2k$ ways $\rho^\prime_3$ may merge with one of $\rho_1$ or $\rho_2$. Therefore, if $U$ is an upper bound on the number of triples of vertex disjoint descending paths, then $2k^2U$ is an upper bound on the number of triples $(\rho_1,\rho_2,\rho_3)$ as described above, and hence also an upper bound on the running time of the algorithm. 

\subsection{Number of vertex-disjoint triples of descending paths}\label{c-tri:sections:triple-paths:num-paths}

Each descending path uses at most one vertex from every onion layer.  Let $n_i = |\setp^{(i)}|$ be the size of the $i$-th onion layer. Let us count how many ways there are for any triple of paths to use at most one vertex each from this layer. There is one way for the triple of paths to skip this onion layer. There are $n_i$ ways of choosing one point among the $n_i$ which may then be used by any of the paths. This gives $3n_i$ ways for the three paths. There are $\binom{n_i}{2}$ ways to choose two points, and any two of the paths may use them. This gives $6\binom{n_i}{2}$ ways among the three paths. Finally there are $\binom{n_i}{3}$ ways of choosing three points, and there are three (not six) ways for the three paths to use one of these vertices. This is because these paths are non-crossing planar curves, and therefore the clockwise order of these paths along any $\Conv\left(\setp^{(i)}\right)$ that intersects all three of them is the same for each $i$. The overall number of ways in which at most three points can be used from the $i$-th layer is therefore $f(n_i)$, where $f(x) = 1 + 3x + 6\frac{x(x-1)}{2} + 3\frac{x(x-1)(x-2)}{6}$. 

The number of triples of vertex disjoint descending paths is therefore at most $U=\prod_{i=1}^{k}f(n_i)$. Since each $n_i$ is a positive integer, and the function $f(\cdot)$ is log-concave for $x\geq 1$
, the above product is maximized when each $n_i$ is equal to $\frac{n}{k}$. This gives an upper bound of $f\left(\frac{n}{k}\right)^k = g\left(\frac{n}{k}\right)^n$, where $g(x) = f(x)^{\frac{1}{x}}$. Now, $g(x)$ is maximized for some value of $x$ between $0$ and $1$ and is a decreasing function for $x\geq 1$. Since each onion layer except the $k$-th one must have at least three points, we have $U=O\left(g(3)^n\right)$. The fact that the $k$-th onion layer may have fewer than three points makes only a difference of a constant factor. Therefore the running time of the algorithm presented in this section is $O\left(k^2g(3)^n\right) = O^{*}(3.1414^n)$. This concludes the proof of Theorem~\ref{c-tri:theorems:triple-paths}.

We want to point out that often the number of onion layers can be much smaller than the maximum possible $\left\lceil\frac{n}{3}\right\rceil$. For example, Dalal~\cite{DBLP:journals/rsa/Dalal04} has shown that if $n$ points are chosen uniformly at random from a disk, then the expected number of onion layers of the resulting point set is $\Theta\left(n^{2/3}\right)$.

From a theoretical point of view, the algorithm presented in this section, sn-path algorithm for short, has a running time polynomial in $n$ whenever the number of onion layers of $\setp$ is constant. This is the first known algorithm for counting triangulations having this property. Also, its worst-case behavior is better than the one based on T-paths presented in~\cite{sweep-line}, $O^{*}(3.1414^{n})$ of the former against $O^{*}(9^{n})$ of the latter. 

The sn-path algorithm is an excellent candidate for having good experimental behavior as well, due to its polynomial-time instances. Towards the end of paper we shall see how the sn-path performs experimentally against the fastest-known (in practice) algorithm for counting triangulations presented in~\cite{ray-seidel}.

\section{Counting other crossing-free structures}\label{c-tri:sections:triangular-paths}

In this section we show how the ideas of the sn-path algorithm can be ``modified'' or ``adapted''  so we can develop a general framework that helps to count crossing-free structures in general. We use this framework to count perfect matchings and spanning cycles defined by $\setp$. 

\subsection{Counting matchings and spanning cycles}\label{c-tri:sections:other-cfs:mandc}

Assume we want to count crossing-free matchings spanned by $\setp$. Clearly any matching can be completed to a triangulation by adding edges, and thus we might want to try the technique used for counting triangulations: Take a set $\SS$ of separators and for each $S \in \SS$ count the matchings in triangulations containing $S$, and finally add this up over all $S \in \SS$. In any matching $M$ that can be completed to a triangulation containing $S$, each vertex in $S$ is either unmatched, or it is matched to a vertex within some $R^S_i$, or it is matched to another vertex in $S$. We can \emph{annotate} each separator $S$ with this information. When counting, for each $S\in\SS$, we iterate over all annotations of $S$, and take care to be consistent with the current annotation when recursing into the sub-problems.

This simple algorithm fails because some matchings $M$ could be contained in a triangulation that could contain several, say $s_M>1$, separators and would thus fool our algorithm to count $M$ exactly $s_M$ times. If $s_{M} = s$ were a constant over all matchings we would not have this problem, however, we are not aware of any set of separators $\SS$ with this property. 

There is however a way in which we can modify the simple algorithm so that we can count each matching exactly once: We embed each matching $M$ into a unique triangulation $T \supset M$. Given a family $\SS$ of separators for the triangulations of $\setp$, we associate a unique $S \in \SS$ to each matching. For concreteness, let us associate to each $M$ the constrained Delaunay triangulation (CDT) $\triangle^M$ \emph{constrained} to contain $M$, which we briefly describe next.

\paragraph{Constrained Delaunay Triangulation:}

The constrained Delaunay triangulation (CDT) $\triangle^{S}$ of $\setp$ was first introduced in~\cite{DBLP:conf/compgeom/Chew87}. Formally, it is the triangulation $T$ of $\setp$ containing $S$ such that no edge $e$ in $T\setminus S$ is flippable in the following sense: Let $\triangle_{1}, \triangle_{2}$ be triangles of $\setp$ sharing $e$. The edge $e$ is flippable if and only if $\square = \triangle_{1}\cup\triangle_{2}$ is convex, and replacing $e$ with the other diagonal of $\square$ increases the smallest angle of the triangulation of $\square$. One of the most important properties of constrained Delaunay triangulations is its uniqueness if no four points of $\setp$ are cocircular. Thus, under standard non-degeneracy assumptions, there is a unique CDT for any given set of mandatory edges. For a good study on constrained Delaunay triangulations we suggest the book~\cite{Hjelle:2006:TA:1214284} by {\O}. Hjelle and M. D{\ae}hlen.

For our counting purposes we will assume that no four points of $\setp$ are cocircular. This can easily be taken care of by perturbing $\setp$. We can now go back to our simple algorithm for counting matchings and revise it as follows: Whenever we recurse, in each sub-problem we only count matchings $M$ such that $S \subseteq\triangle^M$, where $S\in\SS$ is a separator. If this last condition can be satisfied locally in each sub-problem, \emph{i.e.}, choices in one sub-problem do not depend on choices in others, we are done. While not every $\SS$ admits such a locality condition, some do as we will see next.

\subsection{Triangular paths}

We assume again that $\setp$ has $k$ onion layers. For every point~$p \in\setp$ (on layer $\setp^{(i)}$ which is not the first layer) we fix in advance a ray $\rho_p$ which emanates from $p$ and does not intersect the interior of $\Conv\left(\setp^{(i)}\right)$.

For any triangulation $T$ of $\setp$ there is a unique triangle $\triangle_{p} = p,q_1,q_2$ adjacent to $p$ and intersecting $\rho_p$. Let $q_p$ be the smaller of $q_1$ and $q_2$, using the same labeling as before. Clearly $q_p$ lies in a layer lower than the one containing $p$. Let $p_0,p_1,\ldots,p_r$ be the sequence so that $p_0=p$, $p_{i+1} = q_{p_i}$, $\forall\ 0\leq i < k$, and $p_r$ lies on the first layer. We call $\setp_p(T) := \bigcup_i \triangle_{p_i}$ the \emph{triangular path} of $p$ w.r.t.~$T$, and we call $p_r$ the \emph{last point} of $\setp_p(T)$. See Figure~\ref{c-tri:sections:t-paths:figs:24}.

\begin{figure}[!hbt]
    \begin{center}
	\includegraphics[height=4cm]{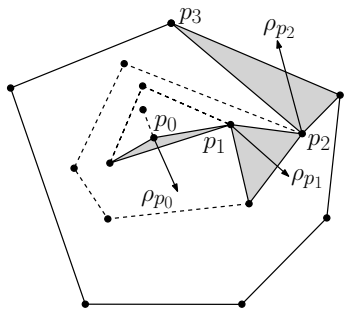}
	\caption{Triangular path $\setp_{p}$ starting in onion layer $\setp^{(4)}$. Onion layers are shown in dashed. $\setp_{p}$ can be extended to a triangulation $T$, in such a case $\setp_{p}$ will be unique for $T$.}
	\label{c-tri:sections:t-paths:figs:24}
	\end{center}
\end{figure}

The triangular path $\setp_p(T)$ is uniquely defined for any triangulation. Moreover, for distinct triangulations $T_1$ and $T_2$, $\setp_p(T_1), \setp_p(T_2)$ are either identical or they intersect properly: Let $i$ be the first position where $\triangle_{p_i}(T_1) \ne \triangle_{p_i}(T_2)$, then those two triangles intersect, as they both are adjacent to $p$, intersect $\rho_p$ and have interiors free of points in $\setp$. We are now ready to finish the algorithm for counting matchings.

\subsubsection{Algorithm for counting matchings}

Given a matching $M$, let $\triangle^{M}$ be the CDT of $M$. By our assumption of no four cocircular points, this CDT is unique for $M$. We annotate $\triangle^{M}$ as follows:
\begin{itemize}
	\item each vertex $v$ of $\triangle^{M}$ is annotated with a number $m_{v}$ that represents the vertex of $M$ that $v$ is matched to. If $m_{v}$ is $0$ say, then we know that $v$ is not matched in $M$.
	\item each edge $e$ of $\triangle^{M}$ is annotated with a bit $b_{e}$ that indicates whether $e$ belongs to $M$ or not.
\end{itemize}

Let us denote by $\CDTA^{M}$ the annotated version of $\triangle^{M}$. Let $S$ be a separator contained in $\triangle^{M}$ that splits $\Conv(\setp)$ into regions $R_{1},\ldots, R_{t}$. Separator $S$ inherits all the information from $\CDTA^{M}$. We additionally keep track of whether $m_{v}$ is any of the adjacencies of $v$ in $S$, for each vertex $v\in S$. If not then we set $m_{v}$ to the index $1\leq i\leq t$ of the region $R_{i}$ the matching vertex of $v$ falls into (it must necessarily be a region having $v$ as a vertex of its boundary). The separator thus annotated will be denoted by $\CDTA^{M}_{S}$.

We say that an annotated constrained Delaunay triangulation is \emph{legal} if and only if it is identical to $\CDTA^{M}$, for some matching $M$. Since there is a one-to-one correspondence between matchings and legal constrained Delaunay triangulations, our goal is to count the latter.

Our algorithm is essentially the same as for counting triangulations: Instead of $\text{sn}$-paths we use annotated triangular paths. We start with an edge $ab$ on $\Conv(\setp)$, and enumerate the set of points $p$ such that the triangle $apb$ is free of other points of $\setp$. For each such $p$, the triangle $apb$ along with the triangular path starting at $p$ forms a separator, see Figure~\ref{c-tri:sections:t-paths:figs:25}. We enumerate such separators and all possible annotations for each one of them. Each such annotated separator splits $\Conv(\setp)$ into two smaller regions in which we recurse. In each recursive sub-problem we count (legal) annotated constrained Delaunay triangulations consistent with the annotated separator.

\begin{figure}[!hbt]
    \begin{center}
	\includegraphics[height=4cm]{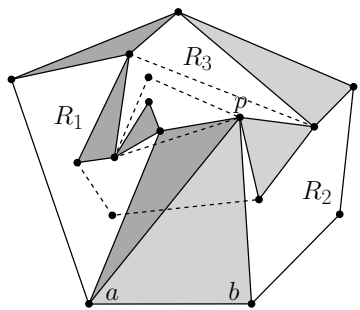}
	\caption{In the first call of the algorithm, the triangular path shown in dark gray is created. It divides the problem into regions $R_{1}\cup R_{3}$ and $R_{2}$. A call for the latter creates the triangular path shown in light gray. Annotations are not shown for simplicity.}
	\label{c-tri:sections:t-paths:figs:25}
	\end{center}
\end{figure}

The reason for which we use triangular paths instead of simple $\text{sn}$-paths is the following: No edge in a separator, formed by a triangular path, lies on the boundary of more than one sub-problem. This allows us to verify flippability of edges separately in each sub-problem. If an edge belonged to more than one sub-problem, then the flippability of this edge would depend on the choices made in each sub-problem, thus introducing dependency between these sub-problems.

As in the case for counting triangulations, we use \emph{memoization}. The running time as before is dominated by the number of triples of annotated triangular paths. The size of each triangular path is $O(k)$, thus there are clearly at most $n^{O(k)}$ triangular paths. Also, trying all possible annotations per triangular path leads to no more than $n^{O(k)}$ annotations per triangular path, as can be easily checked. Hence there are $n^{O(k)}$ annotated triangular paths, and also $n^{O(k)}$ triples of annotated triangular paths. The overall running time is thus $n^{O(k)}$, which even considers the polynomial overheads arising from checking flippability of edges and inclusion of points into sub-problems. This concludes one part of Theorem~\ref{c-tri:theorems:triangular-paths}.

The annotations required for counting matchings are not very complicated, but for many other counting problems this is a highly non-trivial task. An example of more involved annotations is given next, where we consider the problem of counting spanning cycles.

\subsubsection{Algorithm for counting spanning cycles}

Counting spanning cycles is more complicated than counting matchings. What we will actually do is that, instead of counting spanning cycles, we will count \emph{rooted and oriented} spanning cycles. Given any cycle, we make it rooted by designating a \emph{starting vertex}, and we make it oriented by assigning an \emph{orientation}- clockwise or counter-clockwise. We then number the vertices in the cycle from $1$ to $n$, beginning at the starting vertex (which is the root of the cycle), and continuing along the assigned direction. We also direct the edges along this direction. This way, each spanning cycle is counted exactly $2n$ times. At the end we divide the computed number by $2n$ to get the desired number. In the remainder we use the term HamCycle for rooted and oriented spanning cycles.

\newpage

Given a HamCycle $H$ let $\triangle^{H}$ be the CDT of $H$. We annotate $\triangle^{H}$ as follows:
\begin{itemize}
	\item each vertex $v$ of $\triangle^{H}$ is annotated with $(\text{pos}_{v}, \text{prev}_{v}, \text{next}_{v})$, where $\text{pos}_{v}$ is the number assigned to $v$ in $H$, $\text{prev}_{v}$ is the vertex lying immediately before $v$ in $H$, and $\text{next}_{v}$ is the vertex lying immediately after $v$ in $H$.
	\item each edge $e$ in $\triangle^{H}$ is annotated with a bit $b_{e}$ that indicates whether $e$ belongs to $H$ or not.
\end{itemize}

As in the case for matchings, the annotated $\triangle^{H}$ will be denoted by $\CDTA^{H}$. Let $S$ be again a separator contained in $\triangle^{H}$ that splits $\Conv(\setp)$ into regions $R_{1},\ldots R_{t}$. Separator $S$ inherits the following information from $\CDTA^{H}$: Each vertex $v\in S$ inherits $\text{pos}_{v}$ from $\CDTA^{H}$. If $\text{prev}_{v}$ and $\text{next}_{v}$ are already adjacent to $v$ in $S$ then this information is also inherited. If $\text{prev}_{v}$ is absent in $S$ then $v$ is annotated with the index $i$, $1\leq i\leq t$, of the region $R_{i}$ that $\text{prev}_{v}$ falls in. The same holds for $\text{next}_{v}$. Each edge $e$ of $S$ carries the annotation it has in $\CDTA^{H}$. The separator $S$ of $\Delta^{H}$ thus annotated will be denoted by $\CDTA^{H}_{S}$.

The algorithm, as the reader might be thinking right now, is no other than the algorithm for counting matchings. The only difference are the annotations, they encode a different problem. Thus again, the number of triangular paths is $n^{O(k)}$. The number of annotations per triangular path stays $n^{O(k)}$, and hence the total running time will stay at $n^{O(k)}$, including again the other polynomial overheads. This finishes the proof of Theorem~\ref{c-tri:theorems:triangular-paths}.

\section{The hardness result}\label{c-tri:sections:hardness}

In this section we show a hardness result related to our counting algorithms. Observe that those algorithms are parameterized by the number $k$ of onion layers of $\setp$. They have running times of the sort $n^{O(k)}$. Thus, from the complexity point of view, it is natural to ask whether algorithms with running times of the sort $g(k)\cdot n^{O(1)}$, for some function $g$ independent of $n$, are possible. That would mean that our problems belong to the FPT complexity class. Unfortunately, our techniques are general enough to solve harder problems, such as the {\sc Restricted-Triangulation-Counting-Problem} explained before, on page~\pageref{c-tri:theorems:fpt}.

Here we prove Theorem~\ref{c-tri:theorems:fpt}, which states that the {\sc Restricted-Triangulation-Counting-Problem}, RTCP for short, is W[2]-hard if the parameter is considered to be the number of onion layers of $\setp$. More, this result even holds for the problem of just deciding the \emph{existence} of a restricted triangulation. 

The algorithms of Theorems~\ref{c-tri:theorems:triple-paths} and~\ref{c-tri:theorems:triangular-paths} require little to no modification to be run on instances of RTCP, that is, those algorithms are quite generic. Since the separation FPT $\neq$ W[2] is widely believed, and we do not really know about the complexity of the counting problems studied in this paper, we can still hope that, by exploiting structural properties, we could obtain fixed-parameter tractable algorithms for them. The book by J. Flum and M. Grohe, see~\cite{flum-grohe}, is an excellent reference for Parameterized Complexity Theory.

\subsection{Preliminaries}\label{secs:hardness}

Let $\setp$ be a set of $n$ points with $k$ onion layers, and let $E$ be some set of pre-specified edges spanned by $\setp$. We say that a triangulation $T$ of $\setp$ is \emph{restricted} w.r.t.~$E$ if $T \subseteq E$. Here we consider the following {\sc Restricted-Triangulation-Existence-Problem}: On input $(\setp,E)$, decide whether there exists a triangulation of $\setp$ that is restricted w.r.t.~$E$. This defines the {\sc Restricted-Triangulation-Counting-Problem} in the natural way, and the existence problem is trivially reducible to the counting problem.

The {\sc Restricted-Triangulation-Existence-Problem}, RTEP for short, was\\ proven to be NP-complete in~\cite{DBLP:conf/focs/Lloyd77,schulz}. Something very important can be observed here, namely, both reductions are actually parsimonious\footnote{This means that there is a one-to-one correspondence between the solution sets.}, implying \#P-completeness of its natural \emph{counting} problem, RTCP.

So far all reductions involving restricted triangulations rely heavily on the ability to specify a particular set $E$ as part of the input. If $E$ is instead fixed to the set of \emph{all} edges spanned by $\setp$, we obtain the problem of counting \emph{all} triangulations of $\setp$, which we strongly believe to be \#P-complete. 

In this section we parameterize RTCP and RTEP by $k$, the number of onion layers of $\setp$. As we mentioned before, the counting algorithm, for triangulations, presented in Section~\ref{c-tri:sections:triple-paths} can easily be adapted to solve RTCP, and thus also RTEP, in time $n^{O(k)}$. Our proof is by reduction from the {\sc Parameterized-Hitting-Set-Problem}, PHSP for short, which is proven to be W[2]-hard in~\cite{flum-grohe}. An instance $A$ of this problem is formed by numbers $n,m,k\in\mathbb{N}$, along with sets $S_{1},\ldots,S_{m}\subseteq[n]$, where $k$ is considered a parameter, and $[n]:=\{0,\ldots,n-1\}$. The output to $A$ is ``yes'' iff there is a set $H\subseteq[n]$ of size at most $k$, such that $H\cap S_{i}\neq\emptyset$ for every $1\leq i\leq m$. 

In our reduction, several gadgets are used to transform an instance $A$ of the hitting set problem to an instance $G_A=(\setp,E)$ of the {\sc Restricted-Triangulation-Existence-Problem}. The reduction is an \emph{fpt-reduction} in the sense of \cite{flum-grohe}, that is, it maps every instance $A$ with parameter $k$ to an instance $G_A$ with $O(k)$ onion layers. Each gadget is given by a set of points with $O(1)$ onion layers, along with a set of pre-specified edges. The gadgets that will be used are called: \emph{pipes}, \emph{wires}, \emph{ORs}, \emph{terminals}, \emph{testers}, and \emph{crossings}, their specifications will be given later on, for now we would like to explain how the gadgets fit together as well as the intuition behind it.

\subsection{Construction and intuition}

Given an instance $A$ of PHSP, as explained above, we will create in polynomial time an instance $G_{A}=(\setp,E)$ of RTEP of size $\textup{poly}(n,m)$ that has $O(k)$ onion layers and admits a triangulation w.r.t.~$E$ iff $A$ admits a hitting set of size $\leq k$. The mapping $A \mapsto G_A$ will clearly be polynomial-time computable, and thus an fpt-reduction. Figure~\ref{c-tri:sections:t-paths:figs:29} is a reference for the construction that follows.

In the construction, we start with parallel pipes $Q_{1},\ldots,Q_{k}$ of $n$ states each, and of length polynomial in $m$ and $n$. Pipe $Q_{i}$ lies above pipe $Q_{i+1}$. Let $Q_{i}$ be a pipe, $1\leq i\leq k$, and let $S_{j} = \{s_{j,1},\ldots s_{j,t}\}\subseteq[n]$ be a set of instance $A$. 
We define the \emph{stripe} $B_{i,j}$ as a set of $t$ testers attached to $Q_{i}$ that check if $Q_{i}$ carries any of the values of set $S_{j}$, see Figure~\ref{c-tri:sections:t-paths:figs:29}. The stripe $B_{i+1,j}$ will lie in the same vertical slab as $B_{i,j}$. The testers of $B_{i,j}$ are connected to a chain of or-gadgets that lies between pipes $Q_{i}, Q_{i+1}$. For $i < k$, the output of the last or-gadget in $B_{i,j}$ is carried to $B_{i+1,j}$ by a crossing-gadget, see Figure~\ref{c-tri:sections:t-paths:figs:29}. For $i=k$, the last or-gadget in $B_{i,j}$ is connected to a terminal-gadget.

\begin{figure}[!htb]
	\begin{center}
	\includegraphics[scale=0.8]{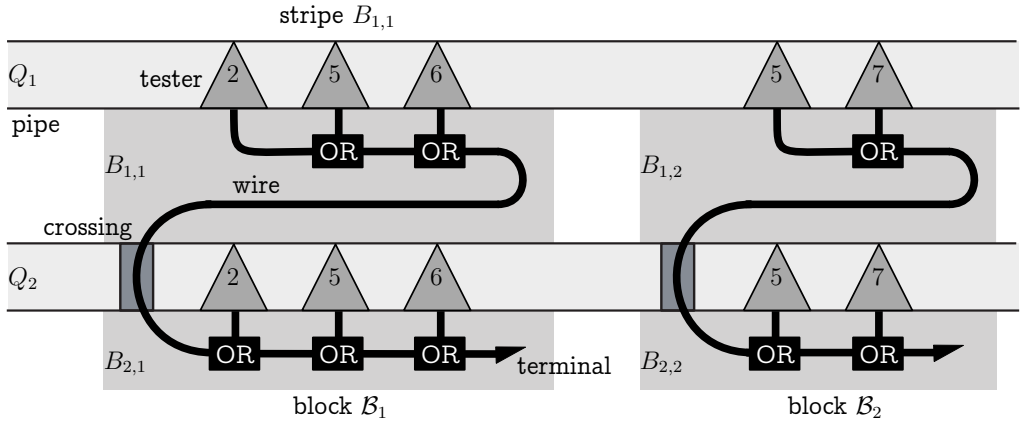}
	\end{center}
	\caption{Instance $G_{A}$ produced from instance $A$ of the {\sc Parameterized-Hitting-Set-Problem} with $n=8, m=2, k=2$ and $S_{1}=\{2,5,6\}$, $S_{2}=\{5,7\}$.}
	\label{c-tri:sections:t-paths:figs:29}
\end{figure}

The \emph{block} $\mathcal{B}_{j}$ is the union of the stripes $B_{1,j},\ldots, B_{k,j}$. The blocks $\mathcal{B}_{1},\ldots,\mathcal{B}_{m}$ are arranged horizontally in such a way that the points in stripes $B_{i,1},\ldots,B_{i,m}$, with $1\leq i\leq k$, are horizontally collinear, that is, they are aligned by their $y$-coordinate.

Finally, $\setp$ is defined to be the set of points of all the gadgets involved. To define the set $E$ of pre-specified edges, we first include the edges of all gadgets involved. Then, the empty spaces between gadgets are triangulated arbitrarily, and these edges are added to $E$. We now set $G_A = (\setp,E)$.

The intuition behind the construction is the following: Horizontally, pipe $Q_{i}$ transmits a single value between $1$ and $n$. The testers in stripe $B_{i,j}$ verify if the value transmitted by $Q_{i}$ hits one of the elements of the set $S_{j}$ of $A$. If so, this information is transmitted vertically along block $\mathcal{B}_{j}$, in such a case the transmitted value is $\textup{true}$. For this transmission we need ORs, wires and crossing gadgets. At the end of block $B_{j}$ the terminal gadget can be triangulated iff the value transmitted to it is $\textup{true}$. If $S_{j}$ is not hit by the value transmitted in $Q_{i}$, then the testers will transmit $\textup{false}$ and this value will be transmitted vertically along $\mathcal{B}_{j}$ until it is possibly flipped by another pipe $Q_{r}$, with $i < r\leq k$, thus $S_{j}$ is not hit by $Q_{i}$ but it is hit by $Q_{r}$. If the value transmitted to the terminal gadget in block $\mathcal{B}_{j}$ is $\textup{false}$, this means that the terminal cannot be triangulated, thus no restricted triangulation of $G_{A}$ exists. This in turns implies that $S_{j}$ was not hit by any value transmitted by the pipes $Q_{1},\ldots, Q_{k}$. If this is always the case then no hitting set of size at most $k$ exists for $A$. 

All this will be formally proven later on, for now we believe that this rough intuition is enough. Therefore we will jump now to define the gadgets formally.

\subsection{Defining the gadgets}

The basic gadget is the \emph{pipe}, shown in Figure~\ref{c-tri:sections:t-paths:figs:26}, whose definition is the following:

\begin{definition}[Pipe]
A \emph{pipe} $Q$ with $n$ states and length $l > 4(n-1)$ consists of points $p_{1}\ldots p_{l}$, $q_{1}\ldots q_{l}$ with $p_{t}=(t,0)$, $q_{t}=(t,1)$, $1\leq t\leq l$, and a set $E_{Q} = S\cup F\cup L_{0}\cup\cdots\cup L_{n-1}$ of pre-specified edges. The individual sets that form $E_{Q}$ are defined as follows:

For $1\leq i\leq n-1$ and $1\leq t\leq l - 4i$ we define the \emph{zig-edges} $a_{i,t}=\{p_{t},q_{t+4i}\}$ and the \emph{zag-edges} $b_{i,t}=\{q_{t+4i},p_{t+1}\}$. For $i=0$, we define other zig- and zag-edges by $a_{0,t}=\{p_{t},q_{t+1}\}$ and $\ b_{0,t}=\{q_{t},p_{t}\}$, where this time $1\leq t\leq l-1$. For $i\in[n]$, we define the \emph{zig-zag} $L_{i}=\{a_{i,1},\ldots,a_{i,l-w},$ $b_{i,1},\ldots,b_{i,l-w}\}$ with $w=1$ for $i = 0$, and $w = 4i$ otherwise.

Next, we add the set of \emph{completion edges} $S$:
\begin{align*}
	S &=\{\{p_{1},q_{t}\}\mid 1\leq t\leq4(n-1)\}\cup\{\{p_{l-t},q_{l}\}\mid0\leq t\leq4(n-1)+2\}
\end{align*}
\indent Finally, we add the \emph{frame edges} $F=\{\{p_{i},p_{i+1}\}\mid i<l\}\cup\{\{q_{i},q_{i+1}\}\mid i<l \}$.
\end{definition}

\begin{figure}[!htb]
	\begin{center}
		\includegraphics[scale=0.9]{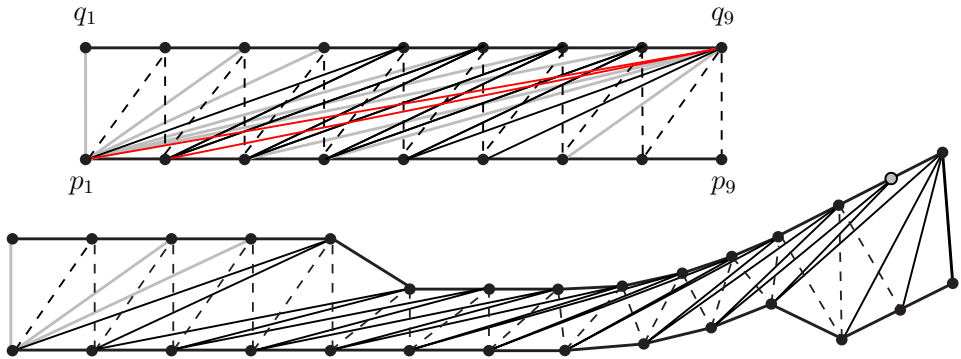}
	\end{center}
	\caption{(Top) A pipe with $3$ states and $l=9$. Thick black edges constitute $F$, thick gray edges constitute $S$, red edges are $L_{2}$, solid thin black edges are zig-zag $L_{1}$, dashed edges are zig-zag $L_{0}$. (Bottom) A stretched and bent wire with a terminal gadget attached to it.}
	\label{c-tri:sections:t-paths:figs:26}
\end{figure}

It is clear that \emph{any} triangulation $T$ of a pipe $Q$ contains exactly one zig-zag $L_{i}$, for $i\in[n]$, since different zig-zags lines cross. The sets $S, F$ help to complete a triangulation of $Q$ whenever zig-zag $L_{i}$ is present. If $L_{i}\subseteq T$, we say that $Q$ ``carries'' the value $i$ in $T$. Note that $F \subseteq T$ holds for every triangulation $T$ of $Q$. We cannot say the same about $S$ however.

A pipe with $n$ states will always be ``horizontal'', \emph{i.e.}, it will not turn in any other direction. This is required for the final set of points to feature a bounded number of onion layers.

In our construction we will also require vertical connections between pipes. These are obtained by \emph{wires}, which are pipes with two states. Since they feature only two states, wires can be stretched by arbitrary factors, and bent by arbitrary angles, while increasing their length only by a constant additive term. This is shown in Fig.~\ref{c-tri:sections:t-paths:figs:26}. For wires, we relabel the values $0$ and $1$ by $\textup{false}$ and $\textup{true}$ respectively.

The remaining gadgets for our reduction are specified and defined as follows: 

\begin{description}
\item [Or-gadget.] This gadget is connected to two input wires $W_{1},W_{2}$, and to an output wire $W_{3}$, as shown in Figure~\ref{c-tri:sections:t-paths:figs:27}. We have that: (\oldstylenums{1}) If one of $W_{1}$ or $W_{2}$ carries $\textup{true}$ in some restricted\footnote{Restricted w.r.t.~the shown adjacencies.} triangulation $T$ of the gadget, then $W_{3}$ may carry $\textup{true}$. (\oldstylenums{2}) If $W_{3}$ carries $\textup{true}$ in $T$, then at least one of $W_{1}$ or $W_{2}$ must necessarily carry $\textup{true}$.

\begin{figure}[!htb]
	\begin{center}
		\includegraphics[scale=0.9]{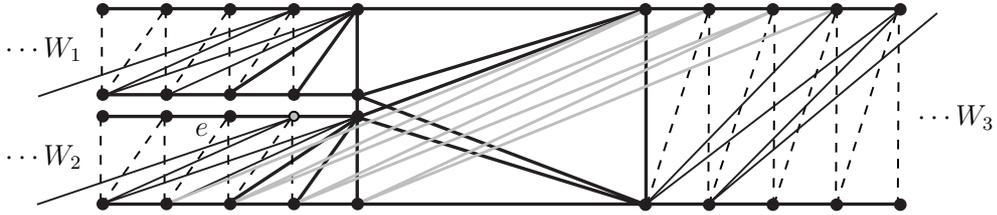}
	\end{center}
	\caption{The or-gadget. The gray edges from $W_{2}$ to $W_{3}$ are ``transfer edges''. An analogous set of edges is also present from $W_{1}$ to $W_{3}$, but suppressed in this figure to improve legibility.}
	\label{c-tri:sections:t-paths:figs:27}
\end{figure}

\item [A terminal-gadget.] This gadget can be attachted to a wire $W$, replacing its ``end part'' as exemplified in the bottom part of Figure~\ref{c-tri:sections:t-paths:figs:26}. It admits a triangulation iff $W$ carries $\textup{true}$.
		
\item [A tester-gadget.] This gadget is connected to a pipe $Q$, for value $i$ at position $t$, between $a_{i,t}$ and $b_{i,t}$, and has an output wire $W$, see to the left in Figure~\ref{c-tri:sections:t-paths:figs:28}. We have that: (\oldstylenums{1}) If $Q$ carries $i$ in some restricted triangulation $T$ of $Q$, then $W$ may carry $\textup{true}$. (\oldstylenums{2}) If $W$ carries $\textup{true}$, then $Q$ must carry $i$ in $T$.

\item [A crossing-gadget.] This is a more intricate gadget which allows an input wire $V$ to intersect a pipe $Q$, leaving it as an output wire $W$. The value carried by $Q$ is not influenced by $V$. We have that: (\oldstylenums{1}) If $V$ carries $\textup{true}$ in some restricted triangulation $T$ of the gadget, then $W$ may carry true. (\oldstylenums{2}) If $W$ carries $\textup{true}$, then $V$ must necessarily carry $\textup{true}$. 

As shown in the middle in Figure~\ref{c-tri:sections:t-paths:figs:28}, $V$ enters the crossing-gadget from the top. If $V$ intersects $Q$ between points $q_{t}$ and $q_{t+1}$ then a new point $r$ collinear with those two points is added to $Q$. Wire $V$ will now enter $Q$ between $r$ and $q_{t+1}$ instead, as shown in the middle in Figure~\ref{c-tri:sections:t-paths:figs:28}. Let us assume that $Q$ is an $n$-state pipe, and consider the set $S$ formed by the points $p_{u}$ such that $a_{i,u}$ is a zig-edge, of zig-zag $L_{i}$, adjacent to $q_{t+1}$, with $0\leq i\leq n-1$. By definition of $a_{i,u}$ we have that $u = t - 4i + 1$ for $1\leq i\leq n-1$, and $u = t$ for $i = 0$. There will be an output wire $W_{i}$, for zig-zag $L_{i}$, which will go out from $Q$ between $p_{u}\in S$ and $p_{u+1}$. 

Since pipes and wires are purely combinatorial objects, we have some freedom to move their points without affecting the adjacencies between the $p$'s and $q$'s, and without losing collinearities. Thus we will move all the $p$ points of $Q$ from $p_{t - 4(n-1) + 1}$ to $p_{t}$ to the right, and condense them in such a way that we keep their linear order, thus we also keep the planarity of the zig-zags $L_{i}$, $0\leq i\leq n-1$. The condensing part is also done in such a way that the following empty convex quadrilateral $C_{u}^{i}$ for zig-zag $L_{i}$ at $p_{u}\in S$ exists: Both diagonals of $C_{u}^{i}$ have negative slopes. One diagonal of $C_{u}^{i}$ is formed by $r$ and $p_{u+1}$. The other diagonal of $C_{u}^{i}$ is form by the point $\alpha_{u}$ of $V$, which is vertically aligned with $r$ and lies three points behind $r$ on $V$, and the point $\beta_{u}$ of $W_{i}$ which is vertically aligned with $p_{u+1}$ and lies three points ahead of $p_{u+1}$ on $W_{i}$. Points $\alpha_{u}$ and $\beta_{u}$, for $u = t - 3$, can be seen in the middle and to the right in Figure~\ref{c-tri:sections:t-paths:figs:28}. Observe that this re-arrangement of elements is always possible.

Now, the zig-edge $a_{i,u}$ of $L_{i}$ adjacent to $q_{t+1}$ is replaced by the edges $a^{\prime}_{i,u} = \{p_{u}, r\}$ and $\{p_{u+1}, r\}$. The latter edge is a diagonal of $C_{u}^{i}$ and is shown in red in Figure~\ref{c-tri:sections:t-paths:figs:28}. The rest of the adjacencies of $L_{i}$ remains the same.

Intuitively speaking, the red edge $\{p_{u+1}, r\}$ will help $V$ to transmit $\textup{false}$ to $W_{i}$, as seen to the right in Figure~\ref{c-tri:sections:t-paths:figs:28} for $i = 1$. Thus we also need to add the edges that will help $V$ to transmit $\textup{true}$ to $W_{i}$. Those edges are shown on solid black for $i = 1$ to the right in Figure~\ref{c-tri:sections:t-paths:figs:28}. The adjacencies are equivalent for any other $0\leq i\leq n-1$. Observe that all these adjacencies intersect neither $a^{\prime}_{i,u}$ nor $b_{i,u}$.

Finally, the output wires $W_0,\ldots,W_{n-1}$ are connected to a chain of or-gadgets, as shown in the middle in Figure~\ref{c-tri:sections:t-paths:figs:28}, whose output is precisely the output wire $W$.
\end{description}

\begin{figure}[!htb]
	\begin{center}
		\includegraphics[scale=0.9]{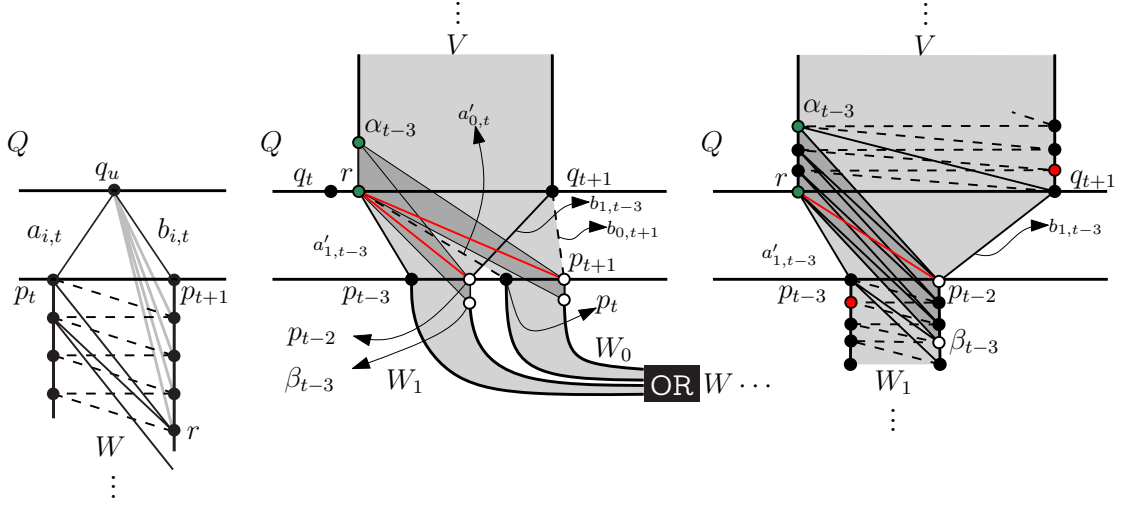}
	\end{center}
	\caption{To the left the tester-gadget for $i$ at $t$. $Q$ is modified by shifting, for $k>0$, all $p_{t+k}$ and $q_{u+k}$ to the right until the triangle $r,p_{t+1},q_{u}$ is oriented counter-clockwise. In the middle a crossing between pipe $Q$ and input wire $V$ which becomes output wire $W$. To the right the details of the crossing for $i = 1$ at $p_{t-3}$.}
	\label{c-tri:sections:t-paths:figs:28}
\end{figure}


\subsection{Formal proofs}

\begin{lemma}\label{c-tri:sections:hardness:lemmas:1}
	All the previous gadgets fulfill their specifications.
\end{lemma}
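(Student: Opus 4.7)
The plan is to handle each gadget in turn, always using the same template: show (a) which zig-zag / transfer configurations are \emph{forced} by the pre-specified edges together with the geometry (a non-chosen alternative would create a crossing), and (b) that once a consistent choice is made, the remaining empty convex regions can be triangulated arbitrarily, i.e.\ the specification has a witness triangulation.

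First I would treat the \textbf{pipe} (and hence the \textbf{wire} as the $n=2$ case). The key observation is that any two zig-zags $L_i,L_j$ share the frame vertices $p_1,\dots,p_l,q_1,\dots,q_l$ but their zig-edges $a_{i,t}$ and $a_{j,t}$ have different slopes and therefore cross properly for some $t$. Hence at most one $L_i$ can appear in a restricted triangulation $T$. For existence, fix an index $i$, put $F\cup L_i$ into $T$, and observe that the remaining empty faces are exactly the two ``triangular end-caps'' on the left and right of $L_i$; the completion edges $S$ fan-triangulate these caps (for each $i$ separately, using either $p_1$ or $q_l$ as the apex), so a triangulation containing $L_i$ exists. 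This shows the pipe carries a unique value $i\in[n]$ in every triangulation.

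Next I would handle the \textbf{terminal}, \textbf{or-gadget} and \textbf{tester}, which are essentially local combinatorial checks. The terminal attached to $W$ is built so that exactly one of the two zig-zags of $W$ leaves an empty convex region adjacent to the terminal; only the $\mathrm{true}$ zig-zag is completable, so the terminal triangulates iff $W$ carries $\mathrm{true}$. For the or-gadget, the pre-specified transfer edges from $W_1$ (resp.\ $W_2$) to $W_3$ cross the $\mathrm{false}$ zig-zag of $W_3$ unless the corresponding input carries $\mathrm{true}$; conversely, if either input is $\mathrm{true}$, the resulting free region between the transfer edges and $W_3$ admits a triangulation with $W_3$ carrying $\mathrm{true}$. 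The tester is argued in the same spirit: the gadget is wedged between the particular zig- and zag-edges $a_{i,t}$ and $b_{i,t}$, so the output wire $W$ can carry $\mathrm{true}$ only if those two specific edges are present, i.e.\ $Q$ carries $i$; and if $Q$ does carry $i$, the tester region becomes an empty convex polygon that can be triangulated with $W$ carrying $\mathrm{true}$.

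The hard part, as expected, is the \textbf{crossing-gadget}, because the same physical zig-zag of $Q$ has to coexist with the $\mathrm{true}$/$\mathrm{false}$ zig-zags of both $V$ and the outputs $W_0,\dots,W_{n-1}$. I would argue in three steps. (i) The pipe $Q$ is unaffected: for every $i$, the replacement of $a_{i,u}$ by $a'_{i,u}=\{p_u,r\}$ and $\{p_{u+1},r\}$ gives a planar zig-zag $L_i$ through $Q$, and the only new intersections introduced by the crossing lie inside the quadrilaterals $C_u^i$, whose diagonal $\{p_{u+1},r\}$ is part of $L_i$; so $Q$ can carry any value $i$ independently of $V$. (ii) If $V$ carries $\mathrm{false}$, then its $\mathrm{false}$ zig-zag passes through $\alpha_u$ on the ``$p_{u+1},r$'' side of $C_u^i$, forcing the diagonal $\{p_{u+1},r\}$ to appear and thereby forcing every $W_i$ into the configuration that corresponds to $\mathrm{false}$; the solid-black adjacencies used for $\mathrm{true}$ transmission are crossed by the $\mathrm{false}$-zig-zag of $V$, so they cannot appear, and the chain of or-gadgets downstream then forces $W$ to carry $\mathrm{false}$. (iii) Conversely, if $V$ carries $\mathrm{true}$, the other diagonal of each $C_u^i$ is available, the solid-black edges toward $W_i$ can be inserted, and $W_i$ (hence $W$ after the or-chain) may carry $\mathrm{true}$. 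Conditions (ii) and (iii) together give both directions of the specification.

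The main obstacle is purely bookkeeping: one must verify that in the crossing gadget no pair of pre-specified edges cross for \emph{any} combination of (value $i$ carried by $Q$, value of $V$, chosen outputs $W_i$), and that the empty regions left after a consistent choice are actually triangulable. This is a finite case analysis over $i\in[n]$ and the two values of $V$; the quadrilaterals $C_u^i$ are designed precisely so that exactly one diagonal is chosen per case, reducing the analysis to inspecting the four-gon $C_u^i$ and the small fan-shaped regions around $r$ and around the $\beta_u$. I would present the pipe and the crossing in detail and relegate the or, terminal and tester to short, picture-driven paragraphs.
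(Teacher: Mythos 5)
Your proposal is correct and follows essentially the same route as the paper: a gadget-by-gadget verification in which uniqueness of the carried value comes from pairwise-crossing zig-zags, forced values come from pre-specified (transfer) edges being blocked or from points that would otherwise remain uncovered, and the crossing-gadget is singled out as the delicate case requiring a per-value check of the quadrilaterals $C_u^i$. The only differences are cosmetic: you verify the pipe/wire inside the lemma (the paper treats that as a preliminary observation), and you prove the crossing-gadget's second property by contrapositive ($V$ false blocks all true-transmission edges, so every $W_i$ and hence $W$ is false) where the paper argues forward via the or-chain and the red points, also noting explicitly that the all-false configuration is still triangulable -- a completability check you correctly flag as remaining bookkeeping.
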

\begin{proof}
\begin{description}
\item [{Or-gadget:}] (\oldstylenums{1}) Assume without loss of generality that $W_{2}$ carries $\textup{true}$ in some restricted triangulation $T$ of the gadget. Observe that $W_{3}$ carries $\textup{true}$ or $\textup{false}$ in $T$ depending on whether the transfer edges, shown in gray in Figure~\ref{c-tri:sections:t-paths:figs:27}, are chosen. For (\oldstylenums{2}) note that if $W_{3}$ carries $\textup{true}$ in $T$, the transfer edges from either $W_{1}$ or $W_{2}$, say $W_{2}$ without loss of generality, must be present. If $W_{2}$ carried $\textup{false}$, it can do it only up to edge $e$ shown in Figure~\ref{c-tri:sections:t-paths:figs:27}, since all following edges intersect transfer edges. But then, the gray point fails to be part of a triangle in \emph{any} restricted triangulation of the gadget, thus $W_{2}$ must necessarily carry $\textup{true}$ in $T$.

\item [{Terminal:}] If $W$ carries $\textup{true}$, the terminal is triangulated as shown at the bottom of Figure~\ref{c-tri:sections:t-paths:figs:26}. However, if $W$ carries $\textup{false}$, then the gray point shown in the same figure fails to be in a triangle of the restricted triangulation of the gadget.

\item [{Tester:}] For (\oldstylenums{1}) assume $Q$ carries $i$ in some restricted triangulation $T$. Since no gray edges in the tester intersect $L_{i}$, they can be added to $T$ or not. That would make $W$ carry $\textup{true}$ or $\textup{false}$ respectively. For (\oldstylenums{2}) given some restricted triangulation $T$, in which $W$ carries $\textup{true}$, all gray edges must be present in $T$. But the gadget is designed such that for every $0\leq j\neq i\leq n-1$, there is an edge $e\in L_{j}$ of $Q$ that intersects both $a_{i,t}$ and $b_{i,t}$, and thus all gray edges. Therefore $e\notin T$, and hence $L_{j}\nsubseteq T$, forcing $L_{i}\subseteq T$.

\item [{Crossing:}] (\oldstylenums{1})
Let $T$ be a restricted triangulation of the whole gadget in which $Q$ carries $i$. Observe that if either, $V$ or $W_{i}$, carries $\textup{true}$ in $T$, then the black solid edges that cross $Q$ from $V$ to $W_{i}$, shown to the right in Figure~\ref{c-tri:sections:t-paths:figs:28} for $i = 1$, must be present. Those edges in turn imply that the other gadget must necessarily carry $\textup{true}$ in $T$ as well, otherwise the red points shown to the right in Figure~\ref{c-tri:sections:t-paths:figs:28} will fail to be part of $T$, which would give us a contradiction since $T$ is a triangulation. Thus $V$ carries $\textup{true}$ iff $W_{i}$ carries $\textup{true}$, as long as $Q$ carries $i$. By using the chain of or-gadgets that the $W_{j}$'s are connected to we could leave the output wire $W$ carrying $\textup{true}$. (\oldstylenums{2}) Assume that $W$ carries $\textup{true}$ and $Q$ carries $i$ in $T$. Observe that in the chain of or-gadgets that the $W_{j}$'s are connected to, we can always force to transmit $\textup{true}$ from $W$ to $W_{i}$, while we transmit $\textup{false}$ to every other $W_{j}$, $0\leq j\neq i\leq n-1$. This in turn will force the edges that cross $Q$ from $W_{i}$ to $V$ to be included in $T$, the black solid edges that cross $Q$ from $W_{i}$ to $V$ shown to the right in Figure~\ref{c-tri:sections:t-paths:figs:28} for $i = 1$. This will make $V$ carry $\textup{true}$.

A triangulation is also possible if $V$ and $W$ carry $\textup{false}$. If $Q$ carries $i$ in $T$, then the edges $a^{\prime}_{i,u}=\{p_u,r\}$ and $\{p_{u+1},r\}$, for some $t - 4(n-1) + 1\leq u\leq t$, are also present in $T$. Thus we transmit $\textup{false}$ from $W$ to every $W_{j}$, $0\leq j\leq n-1$. However, as we said before, the red edge $\{p_{u+1},r\}$ will help to transmit $\textup{false}$ from $W_{i}$ to $V$ through $Q$. Thus $V$ would also carry $\textup{false}$ in $T$.\qedhere
\end{description}
\end{proof}

Theorem~\ref{c-tri:theorems:fpt} follows from the following lemma:

\begin{lemma}\label{c-tri:sections:hardness:lemmas:2}
\label{lem:composition}$G_{A}$ has $O(k)$ onion layers and admits a triangulation iff $A$ admits a hitting set of size $\leq k$.
\end{lemma}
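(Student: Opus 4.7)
The plan is to establish the two claims of the lemma separately, first the combinatorial equivalence between restricted triangulations of $G_A$ and hitting sets of $A$, and then the geometric bound on the number of onion layers.

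For the equivalence, I would exploit the gadget specifications from Lemma~\ref{c-tri:sections:hardness:lemmas:1} as black boxes and argue purely at the level of what each pipe "carries" and what each wire transmits. In the forward direction, suppose $G_A$ admits a restricted triangulation $T$. Since the pre-specified edges force $T$ to contain exactly one zig-zag $L_{v_i}$ per pipe $Q_i$, set $H := \{v_1,\ldots,v_k\} \subseteq [n]$, which has size at most $k$. For each set $S_j$ I need $H \cap S_j \neq \emptyset$. The terminal at the end of block $\mathcal{B}_j$ admits a triangulation only if the wire feeding it carries $\textup{true}$ (terminal specification). Walking backwards through the or-chain of $\mathcal{B}_j$, the or-gadget specification (part 2) forces at least one tester in some stripe $B_{i,j}$ to output $\textup{true}$; the crossing gadgets propagate this requirement upward across intermediate pipes by the crossing specification (part 2). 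A tester attached to $Q_i$ that checks value $s \in S_j$ outputs $\textup{true}$ only when $Q_i$ carries $s$ (tester specification, part 2). Hence $v_i = s \in S_j$, so $H$ hits $S_j$.

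For the converse, assume $H = \{h_1,\ldots,h_k\}$ is a hitting set (pad arbitrarily if $|H| < k$). I construct a restricted triangulation explicitly by choosing, for each pipe $Q_i$, the zig-zag $L_{h_i}$ and completing it using the sets $S$ and $F$ in the pipe definition. For each block $\mathcal{B}_j$, pick an index $i(j)$ such that $h_{i(j)} \in S_j$; in stripe $B_{i(j),j}$ use the tester for $h_{i(j)}$ to emit $\textup{true}$ (specification part 1), propagate $\textup{true}$ through the or-chain (part 1) and through every subsequent crossing gadget (part 1) all the way to the terminal, which is then triangulable. In every other stripe $B_{i',j}$ of the same block, transmit $\textup{false}$ through the or-gadget's other input (always possible by part 1, since the pipe carries some value). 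The key consistency check is that these per-block choices do not conflict: the pipes' zig-zags are fixed globally, and each gadget interacts only with its local wires and a single pipe, so the local triangulations glue to a global restricted triangulation. The arbitrary triangulation of the empty space between gadgets, which was added to $E$ during construction, fills the remainder.

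For the onion-layer bound, I would observe that the gadgets are arranged in a rectangular layout: $k$ horizontal pipes $Q_1,\ldots,Q_k$ (each of constant vertical extent) stacked vertically, with $O(1)$-height gadget rows between consecutive pipes. Thus all points of $G_A$ lie in $O(k)$ horizontal "bands" of constant height, with total vertical extent $O(k)$ and horizontal extent polynomial. Peeling one convex-hull layer removes only the points on the current outer boundary; after a constant number of peelings, the topmost band is fully exhausted and the next band becomes the top. Iterating, after $O(k)$ peels all points are removed. A formal way to make this precise is to note that if every point lies within vertical distance $O(k)$ from the top or bottom of the bounding box, then its onion-layer index is $O(k)$, because each peel strips at least one unit off the top and bottom boundaries of the remaining set.

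The main obstacle I anticipate is the converse (hitting set $\Rightarrow$ triangulation), specifically verifying that the local triangulations chosen inside each gadget are mutually compatible at the shared boundaries: every wire must carry the same value at both endpoints, every crossing must be consistent with the fixed pipe value underneath it, and the or-chain inside each stripe must admit a coherent choice of inputs. This amounts to carefully invoking the "part 1" direction of each gadget specification in the right order (testers first, then or-chains, then crossings, then the next stripe upward or the terminal) and checking that no gadget forces an edge that conflicts with a choice made in a neighboring gadget; the geometric separation between gadgets, together with the fact that each wire is shared by exactly two gadgets, is what makes this work.
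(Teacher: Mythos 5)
Your treatment of the equivalence is essentially the paper's own argument. The forward direction (triangulation $\Rightarrow$ hitting set) differs only in bookkeeping: the paper avoids tracing through crossings by looking at the \emph{first} or-gadget, top-down, in block $\mathcal{B}_j$ whose output carries $\textup{true}$, so that it must be fed by a tester; you instead walk backwards through the or-chain and use part~(2) of the crossing specification. Both are fine and rest on Lemma~\ref{c-tri:sections:hardness:lemmas:1} in the same way, and your converse (fix the zig-zags $L_{h_i}$, route $\textup{true}$ from one hit tester per block to the terminal, $\textup{false}$ elsewhere, fill the gaps with the pre-triangulated edges of $E$) is exactly the paper's construction, with a welcome extra word about why the local choices glue.

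The genuine gap is in your onion-layer bound. The claim that a horizontal band of \emph{constant height} is exhausted after $O(1)$ peels, and its ``formal'' version that each peel strips at least one unit off the top and bottom of the remaining set, is not valid: a convex-hull peel only removes the points attaining the current extreme coordinates, and a band of height $1$ can contain points at arbitrarily many distinct $y$-values --- for instance, $n$ points on tiny concentric circles fit inside a unit-height band yet have $\Theta(n)$ onion layers. The unit-stripping statement would need all $y$-coordinates to lie on a unit-spaced grid, which the construction (stretched and bent wires, the condensed points and the extra point $r$ in the crossing-gadget) does not provide and you do not establish. The paper's argument is different and is the one you actually need: the number of onion layers is at most the number of \emph{distinct $y$-coordinates}, because every peel removes all remaining points with the current maximum (and minimum) $y$-coordinate; and the construction is arranged so that this number is $O(k)$ --- the pipes contribute $2k$ values, each gadget only $O(1)$, and the stripes $B_{i,1},\ldots,B_{i,m}$ are aligned by $y$-coordinate so that each stripe row contributes $O(1)$ distinct values in total, not $O(m)$. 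Your band picture becomes correct once ``constant height'' is replaced by ``constantly many distinct $y$-coordinates per band,'' which is precisely the alignment property built into $G_A$ and the reason the reduction is an fpt-reduction.
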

\begin{proof}Consider the number of different $y$-coordinates of $\setp$. This is an upper bound for the number of onion layers of $\setp$. The pipes contribute $2k$ different $y$-coordinates. Every other gadget features $O(1)$ different $y$-coordinates. Each wire can be stretched and bent with $O(1)$ overhead, thus giving $O(1)$ different $y$-coordinates. Since the points in stripes $B_{i,1},\ldots, B_{i,m}$ are aligned by their $y$-coordinates, each set $B_{i,1}\cup\cdots\cup B_{i,m}$ has $O(1)$ different $y$-coordinates. This totals to $2k+O(k)=O(k)$ different $y$-coordinates among all points in $\setp$.

Given a hitting set $H=\{x_{1},\ldots,x_{k}\}$ of $k$ elements, we construct a triangulation that uses only edges from $E$ as follows: For every $i\leq k$, make $Q_{i}$ carry $x_{i}$. For every $j\leq m$ pick some $x = x_{i}\in H$ such that $x\in S_{j}$. In stripe $B_{i,j}$ triangulate the output wire of the tester for $x$ to carry $\textup{true}$, and transmit this $\textup{true}$ value along the or-gadgets of $B_{i,j}$. When crossing a pipe $Q_{z}$, with $z > i$, the $\textup{true}$ value will get transmitted through the output wire $W_{z}$ of the corresponding crossing-gadget. The $\textup{true}$ value will eventually reach the terminal of $\mathcal{B}_{j}$, which can then be triangulated without problems.

On the other hand, the values $H=\{x_{1},\ldots,x_{k}\}$ carried by the pipes $Q_{1},\ldots,Q_{k}$ in any restricted triangulation of $G_{A}$ form a hitting set. To see this, observe that every terminal must be triangulated, so the wire of every block $\mathcal{B}_{j}$ must carry $\textup{true}$ at some place. Thus, the output of some or-gadget in $\mathcal{B}_{j}$ must carry $\textup{true}$. Consider the first or-gadget that fulfills this top-down, and say it lies in stripe $B_{i,j}$. This or-gadget must be connected to a tester that outputs $\textup{true}$. This implies $x_{i}\in S_{j}$ and $H\cap S_{j}\neq\emptyset$.
\end{proof}

\section{Experimental results on counting triangulations}\label{c-tri:sections:experiments}

\newcommand{\tablesOne}[9]{
	{\color{black} #8} & {\color{black} #9} & \multicolumn{1}{l}{$#1$} & $#7$ & \multicolumn{1}{l}{{\color{black} #2}} &  \multicolumn{1}{l}{{\color{black} #3}} & \multicolumn{1}{l}{{\color{black} #5}} & \multicolumn{1}{l}{{\color{black} #6}}\\
}
\newcommand{\tablesTwo}[9]{
	{\color{black} #1} & {\color{black} #2} & \multicolumn{1}{l}{$#3$} & $#4$ & \multicolumn{1}{l}{$#5$} & $#6$ & $#9$\\
}
\newcommand{\gridOne}[9]{
	{\color{black} #1} & {\color{black} #2} & {\color{black} #9} & \multicolumn{1}{l}{$#3$} & $#4$ & \multicolumn{1}{l}{{\color{black} #5}} & \multicolumn{1}{l}{{\color{black} #7}}\\
}
\newcommand{\gridTwo}[8]{
	{\color{black} #1} & {\color{black} #2} & {\color{black} #7} & \multicolumn{1}{l}{$#3$} & $#4$ & $#8$\\
}

We have implemented the sn-path algorithm presented in~\S~\ref{c-tri:sections:triple-paths}, and in this section we compare it with the algorithm presented in~\cite{ray-seidel}. All experiments were run on a server generously provided by Prof. Bernd Finkbeiner, head of the Reactive Systems group at Saarland University. All implementations are single-threaded, so all algorithms were run a on single core of a dual-core processor AMD Opteron at 2.6 Ghz. Linux was the used operating system, and the amount of RAM available was 122 GB. Finally, all implementations use the GMP library to handle big numbers. All statistics reported here were obtained from the output of the command `time -v'.

The main idea behind the experiments was to obtain evidence of the practical limits of the algorithms, thus they are really provided without statistical analysis. If we denote the number of points by $n$, the number of onion layers by $k$, and the size of the convex hull by $h$, we were interested in knowing for different values of those parameters what are the largest sets of points we can solve. Also, besides providing the number of triangulations, we also provide: (\oldstylenums{1}) Total number of sub-problems generated by each algorithm, (\oldstylenums{2}) Memory consumption, and (\oldstylenums{3}) Total running time. Since we used memoization in all algorithms, the total number of sub-problems is just the size of the database at the end of execution.

We have four kinds of sets of points, of selected cardinalities, we ran the algorithms on: (\oldstylenums{1}) Sets of points having three onion layers. (\oldstylenums{2}) Sets of points generated in a square (\oldstylenums{3}) Sets of points having the largest possible number of onion layers, w.r.t.~the cardinality of the set (\oldstylenums{4}) Grids. Sets (\oldstylenums{1}), (\oldstylenums{2}) and (\oldstylenums{3}) were generated at random. For (\oldstylenums{1}) we generated random points on three concentric circles and we only kept configurations having three onion layers.

Table~\ref{c-tri:sections:experiments:table:review} summarizes the largest sets of points, of each type, that we were able to solve within 140 hours, the complete results can be seen in the tables at the end of the paper. Results for (\oldstylenums{1}) are shown in Tables~\ref{c-tri:sections:experiments:table:3cl:1} and~\ref{c-tri:sections:experiments:table:3cl:2}, for (\oldstylenums{2}) in Tables~\ref{c-tri:sections:experiments:table:square:1} and~\ref{c-tri:sections:experiments:table:square:2}, for (\oldstylenums{3}) in Tables~\ref{c-tri:sections:experiments:table:nestedTri:1} and~\ref{c-tri:sections:experiments:table:nestedTri:2}, and for (\oldstylenums{4}) in Tables~\ref{c-tri:sections:experiments:table:grid:1} and~\ref{c-tri:sections:experiments:table:grid:2}. In the tables the algorithm of~\cite{ray-seidel} is called ``ray-seidel'', and our algorithm simply ``sn-paths''. We did not run ray-seidel on sets of kind (\oldstylenums{4}) due to degeneracy; this functionality was not implemented. All columns are self-explanatory except for the columns ``Base'' and ``Exp''. The former refers to the base $c$, truncated to two decimal digits, of a number expressed as $c^{n}$. The latter refers to the term $d$, also truncated to two decimal digits, of a number expressed as $n^{d\cdot k}$, this makes sense for sn-paths since we know that for fixed $k$ the running time is $n^{O(k)}$.

{\arrayrulecolor{black}
	\begin{table}[!htb]
		\centering
		\begin{tabular}{lcccc}
			& \multicolumn{4}{c}{{\color{black} \# Points}}\\
			\\[-2.5ex]\cline{2-5}\\[-2ex]
			& {\color{black} (\oldstylenums{1})} & {\color{black} (\oldstylenums{2})} & {\color{black} (\oldstylenums{3})} & {\color{black} (\oldstylenums{4})}\\
			\\[-2.5ex]\hline\\[-2ex]
			{\color{black} ray-seidel} & {\color{black} 43} & {\color{black} 43} & {\color{black} 34} & {\color{black} NA}\\
			\\[-2.5ex]\hline\\[-2ex]
			{\color{black} sn-paths} & {\color{black} 80} & {\color{black} 43} & {\color{black} 28} & {\color{black} 6x17}\\
			\\[-2.5ex]\hline
		\end{tabular}
		\caption{Largest sets of points of kind $(i)$, $1\leq i\leq 4$, solved by each algorithm within 140 hours.}
		\label{c-tri:sections:experiments:table:review}
	\end{table}
}

Since we are interested in the largest $n$ we can solve, we started the experiments with at least 25 points, below this threshold all algorithms perform very well, where ray-seidel is notably the fastest, giving answers in at most a couple of seconds, and sn-paths the slowest for $k = 8$. All empty entries, except for the last entry of sn-paths in Tables~\ref{c-tri:sections:experiments:table:square:1} and~\ref{c-tri:sections:experiments:table:square:2}, mean that the corresponding algorithm consumed all available RAM memory \emph{before} finishing the corresponding set of points. In the same sense, one complete empty row means that \emph{no} algorithm managed to finish the corresponding set of points. The last entry of sn-paths in tables~\ref{c-tri:sections:experiments:table:square:1} and~\ref{c-tri:sections:experiments:table:square:2} was explicitly stopped due to its potentially large running time.

To verify the correctness of the algorithms we ran them on configurations available in~\cite{oswin-web,volker-ziegler}, for which an answer is known via other algorithms. We also run them on sets of points in convex position, there the number of triangulations is a Catalan number. In all cases the two algorithms confirmed the known answers.

{\arrayrulecolor{black}
	\begin{sidewaystable}[!htb]
		\centering\small
			\begin{tabular}{cccccccc}
				& & & & \multicolumn{2}{c}{{\color{black} Time in hh:mm:ss.ms}} & \multicolumn{2}{c}{{\color{black}RAM in Mb}}\\
				\\[-2.5ex]\cline{5-8}\\[-2ex]
					$n$ & $h$ & {\color{black}\#Triangulations} & {\color{black} Base} & {\color{black}ray-seidel} & {\color{black} sn-paths} & {\color{black} ray-seidel} & {\color{black} sn-paths}\\
					\\[-2.5ex]\hline\\[-2ex]
					{\color{black} 30} & {\color{black} 10} & \multicolumn{1}{l}{$161014656152655441$} & $\approx 3.74$ & \multicolumn{1}{l}{{\color{black} 20.18}} & \multicolumn{1}{l}{{\color{black} 55.77}} & \multicolumn{1}{l}{{\color{black} 303}} &  \multicolumn{1}{l}{{\color{black} 33}}\\
					 & {\color{black} 10} & \multicolumn{1}{l}{$312513373686594183$} & $\approx 3.82$ & \multicolumn{1}{l}{{\color{black} 1:07.72}} & \multicolumn{1}{l}{{\color{black} 1:09.17}} & \multicolumn{1}{l}{{\color{black} 1160}} &  \multicolumn{1}{l}{{\color{black} 38}}\\
					\\[-2.5ex]\hline\\[-2ex]
					\tablesOne{32155601714553665796}{3:50.96}{2:19.19}{}{2762}{62}{\approx 3.90}{33}{11}
					\tablesOne{68598010833407738067}{58.43}{2:30.94}{}{857}{62}{\approx 3.99}{}{11}
					\\[-2.5ex]\hline\\[-2ex]
					\tablesOne{9334947679230323509429}{1:53.60}{3:43.51}{}{1521}{90}{\approx 3.92}{37}{12}
					\tablesOne{31113068813012076443512}{3:20.41}{4:42.24}{}{2465}{97}{\approx 4.05}{}{12}
					\\[-2.5ex]\hline\\[-2ex]
					\tablesOne{2642143054680217856074126}{18:12.16}{8:10.26}{}{12557}{131}{\approx 4.07}{40}{14}
					\tablesOne{2903778262295075928823011}{7:01.40}{9:50.98}{}{4325}{149}{\approx 4.08}{}{15}
					\\[-2.5ex]\hline\\[-2ex]
					\tablesOne{452371697808162396583055656}{1:34:48}{21:39.66}{}{53263}{243}{\approx 4.16}{43}{14}
					\tablesOne{461550214764369881018564051}{}{19:25.53}{}{}{242}{\approx 4.16}{}{14}
					\\[-2.5ex]\hline\\[-2ex]
					\tablesOne{157759710540671985436621922639}{}{32:46.68}{}{}{363}{\approx 4.18}{47}{16}
					\tablesOne{341037585238678346710372748758}{}{39:33.42}{}{}{420}{\approx 4.24}{}{15}
					\\[-2.5ex]\hline\\[-2ex]
					\tablesOne{54782168649020627430413001433261}{}{1:06:53}{}{}{606}{\approx 4.31}{50}{16}
					\tablesOne{158997592723683977758501079915910}{}{1:07:19}{}{}{553}{\approx 4.40}{}{16}
					\\[-2.5ex]\hline\\[-2ex]
					\tablesOne{383051932722566765683591748023039}{}{7:51:53}{}{}{2006}{\approx 4.56}{60}{21}
					\tablesOne{0004428}{}{}{}{}{}{}{}{}
					\tablesOne{190030780266337926700033771493586}{}{8:59:36}{}{}{2063}{\approx 4.69}{}{20}
					\tablesOne{96361338}{}{}{}{}{}{}{}{}
					\\[-2.5ex]\hline\\[-2ex]
					\tablesOne{481423578642758908977651277967532}{}{22:59:43}{}{}{3937}{\approx 4.64}{70}{25}
					\tablesOne{53695164862078}{}{}{}{}{}{}{}{}
					\tablesOne{224411547729672469823709078962020}{}{26:54:49}{}{}{4506}{\approx 4.74}{}{23}
					\tablesOne{667530864087588}{}{}{}{}{}{}{}{}
					\\[-2.5ex]\hline\\[-2ex]
					\tablesOne{396978851668966053957582788796899}{}{81:29:23}{}{}{8932}{\approx 4.81}{80}{26}
					\tablesOne{3403864755422464030090}{}{}{}{}{}{}{}{}
					\tablesOne{185279982277182715207126583259662}{}{90:34:22}{}{}{9617}{\approx 4.90}{}{26}
					\tablesOne{53393485474040452858832}{}{}{}{}{}{}{}{}
				\\[-2.5ex]\hline
			\end{tabular}
			\caption{$n$ random points on a square, having three onion layers and $h$ points on their convex hull.}
			\label{c-tri:sections:experiments:table:3cl:1}
	\end{sidewaystable}
	
	\begin{table}
		\centering\small
			\begin{tabular}{ccccccc}
				& & \multicolumn{5}{c}{{\color{black} \# Sub-problems}}\\
				\\[-2.5ex]\cline{3-7}\\[-2ex]
				\multicolumn{1}{c}{$n$} & \multicolumn{1}{c}{$h$} & \multicolumn{1}{c}{\color{black}ray-seidel} & {\color{black} Base} & \multicolumn{1}{c}{\color{black} sn-paths} & {\color{black} Base} & {\color{black} Exp}\\
					\\[-2.5ex]\hline\\[-2ex]
					\tablesTwo{30}{10}{2050514}{\approx 1.62}{215732}{\approx 1.50}{}{}{\approx 1.20}
					\tablesTwo{}{10}{7879754}{\approx 1.69}{246657}{\approx 1.51}{}{}{\approx 1.21}
					\\[-2.5ex]\hline\\[-2ex]
					\tablesTwo{33}{11}{18992928}{\approx 1.66}{405580}{\approx 1.47}{}{}{\approx 1.23}
					\tablesTwo{}{11}{5812991}{\approx 1.60}{410357}{\approx 1.47}{}{}{\approx 1.23}
					\\[-2.5ex]\hline\\[-2ex]
					\tablesTwo{37}{12}{10027300}{\approx 1.54}{575255}{\approx 1.43}{}{}{\approx 1.22}
					\tablesTwo{}{12}{16250100}{\approx 1.56}{626274}{\approx 1.43}{}{}{\approx 1.23}
					\\[-2.5ex]\hline\\[-2ex]
					\tablesTwo{40}{14}{82635240}{\approx 1.57}{866278}{\approx 1.40}{}{}{\approx 1.23}
					\tablesTwo{}{15}{28333612}{\approx 1.53}{982791}{\approx 1.41}{}{}{\approx 1.24}
					\\[-2.5ex]\hline\\[-2ex]
					\tablesTwo{43}{14}{347603518}{\approx 1.57}{1604269}{\approx 1.39}{}{}{\approx 1.26}
					\tablesTwo{}{14}{}{}{1591423}{\approx 1.39}{}{}{\approx 1.26}
					\\[-2.5ex]\hline\\[-2ex]
					\tablesTwo{47}{16}{}{}{2287764}{\approx 1.36}{}{}{\approx 1.26}
					\tablesTwo{}{15}{}{}{2720786}{\approx 1.37}{}{}{\approx 1.28}
					\\[-2.5ex]\hline\\[-2ex]
					\tablesTwo{50}{16}{}{}{3631525}{\approx 1.35}{}{}{\approx 1.28}
					\tablesTwo{}{16}{}{}{3998798}{\approx 1.35}{}{}{\approx 1.29}
					\\[-2.5ex]\hline\\[-2ex]
					\tablesTwo{60}{21}{}{}{12527119}{\approx 1.31}{}{}{\approx 1.33}
					\tablesTwo{}{20}{}{}{13076694}{\approx 1.31}{}{}{\approx 1.33}
					\\[-2.5ex]\hline\\[-2ex]
					\tablesTwo{70}{25}{}{}{23762305}{\approx 1.27}{}{}{\approx 1.33}
					\tablesTwo{}{23}{}{}{27937551}{\approx 1.27}{}{}{\approx 1.34}
					\\[-2.5ex]\hline\\[-2ex]
					\tablesTwo{80}{26}{}{}{54047260}{\approx 1.24}{}{}{\approx 1.35}
					\tablesTwo{}{26}{}{}{58561612}{\approx 1.25}{}{}{\approx 1.36}
				\\[-2.5ex]\hline
			\end{tabular}
			\caption{Number of sub-problems generated by the configurations (entry-wise) presented in Table~\ref{c-tri:sections:experiments:table:3cl:1}.}
			\label{c-tri:sections:experiments:table:3cl:2}
	\end{table}
}

{\arrayrulecolor{black}
\begin{sidewaystable}[p]
	\centering\small
		\begin{tabular}{ccccccccc}
			& & & & & \multicolumn{2}{c}{{\color{black} Time in hhh:mm:ss.ms}} & \multicolumn{2}{c}{{\color{black}RAM in Mb}}\\
			\\[-2.5ex]\cline{6-9}\\[-2ex]
			$n$ & $k$ & $h$ & {\color{black}\#Triangulations} & {\color{black} Base} & {\color{black}ray-seidel} & {\color{black} sn-paths} & {\color{black} ray-seidel} & {\color{black} sn-paths}\\
			\\[-2.5ex]\hline\\[-2ex]
			{\color{black} 30} & {\color{black} 5} & {\color{black} 9} & \multicolumn{1}{l}{$29762284427845618$} & $\approx 3.54$ & \multicolumn{1}{l}{{\color{black} 7.60}} &  \multicolumn{1}{l}{{\color{black} 3:56.61}} & \multicolumn{1}{l}{{\color{black} 130}} &  \multicolumn{1}{l}{{\color{black} 141}}\\
			 & {\color{black} 6} & {\color{black} 7} & \multicolumn{1}{l}{$54648952555202115$} & $\approx 3.61$ & \multicolumn{1}{l}{{\color{black} 30.69}} & \multicolumn{1}{l}{{\color{black} 16:17.12}} & \multicolumn{1}{l}{{\color{black} 470}} & \multicolumn{1}{l}{{\color{black} 535}}\\
			\\[-2.5ex]\hline\\[-2ex]
			\multicolumn{1}{c}{{\color{black} 33}} & \multicolumn{1}{c}{{\color{black} 5}} & \multicolumn{1}{c}{{\color{black} 11}} & \multicolumn{1}{l}{$8830953374442248378$} & \multicolumn{1}{c}{{\color{black} $\approx 3.75$}} &  \multicolumn{1}{l}{{\color{black} 34.80}} & \multicolumn{1}{l}{{\color{black} 14:47.26}} & \multicolumn{1}{l}{{\color{black} 643}} & \multicolumn{1}{l}{{\color{black} 394}}\\
			\multicolumn{1}{c}{} & \multicolumn{1}{c}{{\color{black} 6}} & \multicolumn{1}{c}{{\color{black} 7}} & \multicolumn{1}{l}{$23407918365649149382$} & \multicolumn{1}{c}{{\color{black} $\approx3.86$}} & \multicolumn{1}{l}{{\color{black} 15.10}} &  \multicolumn{1}{l}{{\color{black} 1:10:34}} & \multicolumn{1}{l}{{\color{black} 288}} &  \multicolumn{1}{l}{{\color{black} 1292}}\\
			\\[-2.5ex]\hline\\[-2ex]
			\multicolumn{1}{c}{{\color{black} 37}} & \multicolumn{1}{c}{{\color{black} 5}} & \multicolumn{1}{c}{{\color{black} 11}} & \multicolumn{1}{l}{$8317197892568798832050$} & \multicolumn{1}{c}{{\color{black} $\approx 3.91$}} & \multicolumn{1}{l}{{\color{black} 3:35.27}} & \multicolumn{1}{l}{{\color{black} 1:15:00}} & \multicolumn{1}{l}{{\color{black} 2796}} &  \multicolumn{1}{l}{{\color{black} 1524}}\\
			\multicolumn{1}{c}{} & \multicolumn{1}{c}{{\color{black} 5}} & \multicolumn{1}{c}{{\color{black} 13}} & \multicolumn{1}{l}{$15347609782987966767248$} & \multicolumn{1}{c}{{\color{black} $\approx 3.97$}} & \multicolumn{1}{l}{{\color{black} 15:23.53}} &  \multicolumn{1}{l}{{\color{black} 2:16:32}} & \multicolumn{1}{l}{{\color{black} 10707}} &  \multicolumn{1}{l}{{\color{black} 1957}}\\
			\\[-2.5ex]\hline\\[-2ex]
			\multicolumn{1}{c}{{\color{black} 40}} & \multicolumn{1}{c}{{\color{black} 6}} & \multicolumn{1}{c}{{\color{black} 12}} & \multicolumn{1}{l}{$1146138971033715203926926$} & \multicolumn{1}{c}{{\color{black} $\approx 3.99$}} & \multicolumn{1}{l}{{\color{black} 25:42.43}} & \multicolumn{1}{l}{{\color{black} 13:29:43}} & \multicolumn{1}{l}{{\color{black} 18525}} &  \multicolumn{1}{l}{{\color{black} 8889}}\\
			\multicolumn{1}{c}{} & \multicolumn{1}{c}{{\color{black} 7}} & \multicolumn{1}{c}{{\color{black} 10}} & \multicolumn{1}{l}{$5050493282169462429012536$} & \multicolumn{1}{c}{{\color{black} $\approx 4.14$}} & \multicolumn{1}{l}{{\color{black} 1:35:45}} &  \multicolumn{1}{l}{{\color{black} 46:49:41}} & \multicolumn{1}{l}{{\color{black} 54128}} &  \multicolumn{1}{l}{{\color{black} 25533}}\\
			\\[-2.5ex]\hline\\[-2ex]
			\multicolumn{1}{c}{{\color{black} 43}} & \multicolumn{1}{c}{{\color{black} 6}} & \multicolumn{1}{c}{{\color{black} 10}} & \multicolumn{1}{l}{$981403313298259834292202925$} & \multicolumn{1}{c}{{\color{black} $\approx 4.24$}} & \multicolumn{1}{l}{{\color{black} 3:20:54}} & \multicolumn{1}{l}{{\color{black} 107:48:48}} & \multicolumn{1}{l}{{\color{black} 116506}} &  \multicolumn{1}{l}{{\color{black} 37407}}\\
			\multicolumn{1}{c}{} & \multicolumn{1}{c}{{\color{black} 7}} & \multicolumn{1}{c}{{\color{black} 8}} & \multicolumn{1}{c}{{\color{black}}} & \multicolumn{1}{c}{} & \multicolumn{1}{c}{{\color{black}}} &  \multicolumn{1}{c}{{\color{black}}} & \multicolumn{1}{c}{{\color{black}}} &  \multicolumn{1}{c}{{\color{black}}}\\
		\\[-2.5ex]\hline
		\end{tabular}
		\caption{$n$ random points on a square, having $k$ onion layers and $h$ points of their convex hull.}
		\label{c-tri:sections:experiments:table:square:1}
		
		\vspace{1cm}
		
		\begin{tabular}{cccccccc}
			& & & \multicolumn{5}{c}{{\color{black} \# Sub-problems}}\\
			\\[-2.5ex]\cline{4-8}\\[-2ex]
			$n$ & $k$ & $h$ & {\color{black} ray-seidel} & {\color{black} Base} & {\color{black} sn-paths} & {\color{black} Base} & {\color{black}  Exp}\\
			\\[-2.5ex]\hline\\[-2ex]
			\multicolumn{1}{c}{{\color{black} 30}} & \multicolumn{1}{c}{{\color{black} 5}} & \multicolumn{1}{c}{{\color{black} 9}} & \multicolumn{1}{l}{$854579$} & \multicolumn{1}{c}{{\color{black} $\approx 1.56$}} & \multicolumn{1}{l}{$947262$} & \multicolumn{1}{c}{{\color{black} $\approx 1.58$}} & $\approx 0.80$\\
			\multicolumn{1}{c}{} &  \multicolumn{1}{c}{{\color{black} 6}} & \multicolumn{1}{c}{{\color{black} 7}} & \multicolumn{1}{l}{$3150228$} & \multicolumn{1}{c}{{\color{black} $\approx 1.64$}} & \multicolumn{1}{l}{$3590878$} & \multicolumn{1}{c}{{\color{black} $\approx 1.65$}} & $\approx 0.73$\\
			\\[-2.5ex]\hline\\[-2ex]
			\multicolumn{1}{c}{{\color{black} 33}} & \multicolumn{1}{c}{{\color{black} 5}} & \multicolumn{1}{c}{{\color{black} 11}} & \multicolumn{1}{l}{$4245399$} & \multicolumn{1}{c}{{\color{black} $\approx 1.58$}} & \multicolumn{1}{l}{$2554063$} & \multicolumn{1}{c}{{\color{black} $\approx 1.56$}} & $\approx 0.84$\\
			\multicolumn{1}{c}{} & \multicolumn{1}{c}{{\color{black} 6}} & \multicolumn{1}{c}{{\color{black} 7}} & \multicolumn{1}{l}{$1907449$} & \multicolumn{1}{c}{{\color{black} $\approx 1.54$}} & \multicolumn{1}{l}{$8731943$} & \multicolumn{1}{c}{{\color{black} $\approx 1.62$}} & $\approx 0.76$\\
			\\[-2.5ex]\hline\\[-2ex]
			\multicolumn{1}{c}{{\color{black} 37}} & \multicolumn{1}{c}{{\color{black} 5}} & \multicolumn{1}{c}{{\color{black} 11}} & \multicolumn{1}{l}{$18477670$} & \multicolumn{1}{c}{{\color{black} $\approx 1.57$}} & \multicolumn{1}{l}{$9735430$} & \multicolumn{1}{c}{{\color{black} $\approx 1.54$}} & $\approx 0.89$\\
			\multicolumn{1}{c}{} & \multicolumn{1}{c}{{\color{black} 5}} & \multicolumn{1}{c}{{\color{black} 13}} & \multicolumn{1}{l}{$70483691$} & \multicolumn{1}{c}{{\color{black} $\approx 1.62$}} & \multicolumn{1}{l}{$12535632$} & \multicolumn{1}{c}{{\color{black} $\approx 1.55$}} & $\approx 0.90$\\
			\\[-2.5ex]\hline\\[-2ex]
			\multicolumn{1}{c}{{\color{black} 40}} & \multicolumn{1}{c}{{\color{black} 6}} & \multicolumn{1}{c}{{\color{black} 12}} & \multicolumn{1}{l}{$121049523$} & \multicolumn{1}{c}{{\color{black} $\approx 1.59$}} & \multicolumn{1}{l}{$56587195$} & \multicolumn{1}{c}{{\color{black} $\approx 1.56$}} & $\approx 0.80$\\
			\multicolumn{1}{c}{} & \multicolumn{1}{c}{{\color{black} 7}} & \multicolumn{1}{c}{{\color{black} 10}} & \multicolumn{1}{l}{$354717051$} & \multicolumn{1}{c}{{\color{black} $\approx 1.63$}} & \multicolumn{1}{l}{$155716531$} & \multicolumn{1}{c}{{\color{black} $\approx 1.60$}} & $\approx 0.73$\\
			\\[-2.5ex]\hline\\[-2ex]
			\multicolumn{1}{c}{{\color{black} 43}} & \multicolumn{1}{c}{{\color{black} 6}} & \multicolumn{1}{c}{{\color{black} 10}} & \multicolumn{1}{l}{$752596823$} & \multicolumn{1}{c}{{\color{black} $\approx 1.60$}} & \multicolumn{1}{l}{$239084256$} & \multicolumn{1}{c}{{\color{black} $\approx 1.56$}} & $\approx 0.85$\\
			\multicolumn{1}{c}{} & \multicolumn{1}{c}{{\color{black} 7}} & \multicolumn{1}{c}{{\color{black} 8}} & \multicolumn{1}{c}{\color{black}} & \multicolumn{1}{c}{\color{black}} & \multicolumn{1}{c}{\color{black}} & \multicolumn{1}{c}{\color{black}} & \\
			\\[-2.5ex]\hline
		\end{tabular}
		\caption{Number of sub-problems generated by the configurations (entry-wise) presented in Table~\ref{c-tri:sections:experiments:table:square:1}.}
		\label{c-tri:sections:experiments:table:square:2}
\end{sidewaystable}
}

{\arrayrulecolor{black}
	\begin{sidewaystable}
		\centering\small
			\begin{tabular}{cccccccc}
				& & & & \multicolumn{2}{c}{{\color{black} Time in hh:mm:ss.ms}} & \multicolumn{2}{c}{{\color{black}RAM in Mb}}\\ 
				\\[-2.5ex]\cline{5-8}\\[-2ex]
				\multicolumn{1}{c}{$k$} & $n$ & {\color{black}\#Triangulations} & \multicolumn{1}{c}{{\color{black} Base}} & {\color{black}ray-seidel} & {\color{black} sn-paths} & {\color{black} ray-seidel} & {\color{black} sn-paths}\\
			\\[-2.5ex]\hline\\[-2ex]
				\multicolumn{1}{c}{{\color{black} 9}} & \multicolumn{1}{c}{{\color{black} 25}} & \multicolumn{1}{l}{$248441701550196$} & \multicolumn{1}{c}{{\color{black} $\approx 3.76$}} & \multicolumn{1}{l}{{\color{black} 7.82}} &  \multicolumn{1}{l}{{\color{black} 1:36:07}} & \multicolumn{1}{l}{{\color{black} 154}} &  \multicolumn{1}{l}{{\color{black} 3828}}\\
			\multicolumn{1}{c}{} &  \multicolumn{1}{c}{{\color{black} 27}} & \multicolumn{1}{l}{$6632755933105064$} & \multicolumn{1}{c}{{\color{black} $\approx 3.85$}} & \multicolumn{1}{l}{{\color{black} 1:16.84}} & \multicolumn{1}{l}{{\color{black} 9:56:09}} & \multicolumn{1}{l}{{\color{black} 1239}} &  \multicolumn{1}{l}{{\color{black} 17406}}\\
			\\[-2.5ex]\hline\\[-2ex]
				\multicolumn{1}{c}{{\color{black} 10}} &  \multicolumn{1}{c}{{\color{black} 28}} & \multicolumn{1}{l}{$134806114688321888$} & \multicolumn{1}{c}{{\color{black} $\approx 4.09$}} & \multicolumn{1}{l}{{\color{black} 1:04.33}} & \multicolumn{1}{l}{{\color{black} 39:29:17}} & \multicolumn{1}{l}{{\color{black} 1130}} &  \multicolumn{1}{l}{{\color{black} 56197}}\\
				\multicolumn{1}{c}{} & \multicolumn{1}{c}{{\color{black} 28}} & \multicolumn{1}{l}{$259051751512786147$} & \multicolumn{1}{c}{{\color{black} $\approx 4.18$}} & \multicolumn{1}{l}{{\color{black} 58.55}} &  \multicolumn{1}{l}{{\color{black} 32:31:58}} & \multicolumn{1}{l}{{\color{black} 903}} &  \multicolumn{1}{l}{{\color{black} 41745}}\\
			\\[-2.5ex]\hline\\[-2ex]
				\multicolumn{1}{c}{{\color{black} 11}} &  \multicolumn{1}{c}{{\color{black} 32}} & \multicolumn{1}{l}{$188748482026800154083$} & \multicolumn{1}{c}{{\color{black} $\approx 4.30$}} & \multicolumn{1}{l}{{\color{black} 1:43:22}} & \multicolumn{1}{c}{{\color{black}}} & \multicolumn{1}{l}{{\color{black} 70647}} &  \multicolumn{1}{c}{{\color{black}}}\\
				\multicolumn{1}{c}{} & \multicolumn{1}{c}{{\color{black} 33}} & \multicolumn{1}{l}{$2680138023948109608080$} & \multicolumn{1}{c}{{\color{black} $\approx 4.46$}} & \multicolumn{1}{l}{{\color{black} 1:38:04}} &  \multicolumn{1}{c}{{\color{black}}} & \multicolumn{1}{l}{{\color{black} 62477}} &  \multicolumn{1}{c}{{\color{black}}}\\
			\\[-2.5ex]\hline\\[-2ex]
				\multicolumn{1}{c}{{\color{black} 12}} &  \multicolumn{1}{c}{{\color{black} 34}} & \multicolumn{1}{l}{$16605186163166445755560$} & \multicolumn{1}{c}{{\color{black} $\approx 4.50$}} & \multicolumn{1}{l}{{\color{black} 2:52:16}} & \multicolumn{1}{c}{{\color{black}}} & \multicolumn{1}{l}{{\color{black} 114676}} &  \multicolumn{1}{c}{{\color{black}}}\\
				\multicolumn{1}{c}{} & \multicolumn{1}{c}{{\color{black} 34}} & \multicolumn{1}{c}{{\color{black}}} & \multicolumn{1}{c}{{\color{black}}} & \multicolumn{1}{c}{{\color{black}}} &  \multicolumn{1}{c}{{\color{black}}} & \multicolumn{1}{c}{{\color{black}}} &  \multicolumn{1}{c}{{\color{black}}}\\
			\\[-2.5ex]\hline
			\end{tabular}
			\caption{$n$ random points having $k = \left\lceil\frac{n}{3}\right\rceil$ onion layers.}
			\label{c-tri:sections:experiments:table:nestedTri:1}
			
			\vspace{1cm}
			
		\begin{tabular}{cccccc}
			& & \multicolumn{4}{c}{{\color{black} \# Sub-problems}}\\
			\\[-2.5ex]\cline{3-6}\\[-2ex]
			\multicolumn{1}{c}{$k$} & $n$ & \multicolumn{1}{c}{\color{black}ray-seidel} & {\color{black} Base} & \multicolumn{1}{c}{\color{black} sn-paths} & {\color{black} Base}\\
			\\[-2.5ex]\hline\\[-2ex]
			\multicolumn{1}{c}{{\color{black} 9}} & \multicolumn{1}{c}{{\color{black} 25}} & \multicolumn{1}{l}{$1078399$} & \multicolumn{1}{c}{{\color{black} $\approx 1.74$}} & \multicolumn{1}{l}{$24811886$} & \multicolumn{1}{c}{{\color{black} $\approx 1.97$}}\\
			\multicolumn{1}{c}{} &  \multicolumn{1}{c}{{\color{black} 27}} & \multicolumn{1}{l}{$8738535$} & \multicolumn{1}{c}{{\color{black} $\approx 1.80$}} & \multicolumn{1}{l}{$110817524$} & \multicolumn{1}{c}{{\color{black} $\approx 1.98$}}\\
			\\[-2.5ex]\hline\\[-2ex]
			\multicolumn{1}{c}{{\color{black} 10}} & \multicolumn{1}{c}{{\color{black} 28}} & \multicolumn{1}{l}{$8015023$} & \multicolumn{1}{c}{{\color{black} $\approx 1.76$}} & \multicolumn{1}{l}{$347448787$} & \multicolumn{1}{c}{{\color{black} $\approx 2.01$}}\\
			\multicolumn{1}{c}{} & \multicolumn{1}{c}{{\color{black} 28}} & \multicolumn{1}{l}{$6303203$} & \multicolumn{1}{c}{{\color{black} $\approx 1.74$}} & \multicolumn{1}{l}{$266661064$} & \multicolumn{1}{c}{{\color{black} $\approx 1.99$}}\\
			\\[-2.5ex]\hline\\[-2ex]
			\multicolumn{1}{c}{{\color{black} 11}} & \multicolumn{1}{c}{{\color{black} 32}} & \multicolumn{1}{l}{$478692844$} & \multicolumn{1}{c}{{\color{black} $\approx 1.86$}} & \multicolumn{1}{c}{{\color{black}}} & \multicolumn{1}{c}{{\color{black}}}\\
			\multicolumn{1}{c}{} & \multicolumn{1}{c}{{\color{black} 33}} & \multicolumn{1}{l}{$423236754$} & \multicolumn{1}{c}{{\color{black} $\approx 1.82$}} & \multicolumn{1}{c}{{\color{black}}} & \multicolumn{1}{c}{{\color{black}}}\\
			\\[-2.5ex]\hline\\[-2ex]
			\multicolumn{1}{c}{{\color{black} 12}} & \multicolumn{1}{c}{{\color{black} 34}} & \multicolumn{1}{l}{$773361622$} & \multicolumn{1}{c}{{\color{black} $\approx 1.82$}} & \multicolumn{1}{c}{{\color{black}}} & \multicolumn{1}{c}{{\color{black}}}\\
			\multicolumn{1}{c}{} & \multicolumn{1}{c}{{\color{black} 34}} & \multicolumn{1}{c}{{\color{black}}} & \multicolumn{1}{c}{{\color{black}}} & \multicolumn{1}{c}{{\color{black}}} & \multicolumn{1}{c}{{\color{black}}}\\
			\\[-2.5ex]\hline
		\end{tabular}
		\caption{Number of sub-problems generated by the configurations (entry-wise) presented in Table~\ref{c-tri:sections:experiments:table:nestedTri:1}.}
		\label{c-tri:sections:experiments:table:nestedTri:2}
	\end{sidewaystable}
}

{\arrayrulecolor{black}
	\begin{sidewaystable}
		\centering\small
			\begin{tabular}{ccccccc}
				& & & & & \multicolumn{1}{c}{{\color{black} Time in hhh:mm:ss.ms}} & \multicolumn{1}{c}{{\color{black}RAM in Mb}}\\ 
				\\[-2.5ex]\cline{5-7}\\[-2ex]
				$n$ & $m$ & $k$ & {\color{black}\#Triangulations} & {\color{black} Base} & {\color{black} sn-paths} & {\color{black} sn-paths}\\
				\\[-2.5ex]\hline\\[-2ex]
					\gridOne{6}{6}{260420548144996}{\approx 2.51}{10.36}{}{16}{}{3}
					\\[-2.5ex]\hline\\[-2ex]
					\gridOne{6}{7}{341816489625522032}{\approx 2.61}{45.79}{}{41}{}{3}
					\\[-2.5ex]\hline\\[-2ex]
					\gridOne{6}{8}{464476385680935656240}{\approx 2.69}{2:31.70}{}{107}{}{3}
					\\[-2.5ex]\hline\\[-2ex]
					\gridOne{6}{9}{645855159466371391947660}{\approx 2.76}{7:28.22}{}{213}{}{3}
					\\[-2.5ex]\hline\\[-2ex]
					\gridOne{6}{10}{913036902513499041820702784}{\approx 2.81}{17:27.63}{}{460}{}{3}
					\\[-2.5ex]\hline\\[-2ex]
					\gridOne{6}{11}{1306520849733616781789190513820}{\approx 2.85}{39:35.88}{}{840}{}{3}
					\\[-2.5ex]\hline\\[-2ex]
					\gridOne{6}{12}{1887591165891651253904039432371172}{\approx 2.89}{1:22:15}{}{1555}{}{3}
					\\[-2.5ex]\hline\\[-2ex]
					\gridOne{6}{13}{2747848427721241461905176361078147168}{\approx 2.93}{2:50:42}{}{2479}{}{3}
					\\[-2.5ex]\hline\\[-2ex]
					\gridOne{6}{14}{4024758386310801427793602374466243714608}{\approx 2.96}{5:14:23}{}{4439}{}{3}
					\\[-2.5ex]\hline\\[-2ex]
					\gridOne{6}{15}{5924744736041718687622958191829471010847132}{\approx 2.98}{8:43:14}{}{6315}{}{3}
					\\[-2.5ex]\hline\\[-2ex]
					\gridOne{6}{16}{8757956199571261116690226598764501142088496860}{\approx 3.01}{14:55:29}{}{10344}{}{3}
					\\[-2.5ex]\hline\\[-2ex]
					\gridOne{6}{17}{12991215957916577635251095613859465176216530106080}{\approx 3.03}{26:08:38}{}{15023}{}{3}
					\\[-2.5ex]\hline\\[-2ex]
					\gridOne{7}{7}{1999206934751133055518}{\approx 2.72}{6:09.75}{}{187}{}{4}
					\\[-2.5ex]\hline\\[-2ex]
					\gridOne{7}{8}{12169409954141988707186052}{\approx 2.80}{36:59.53}{}{869}{}{4}
					\\[-2.5ex]\hline\\[-2ex]
					\gridOne{7}{9}{76083336332947513655554918994}{\approx 2.87}{2:28:42}{}{2344}{}{4}
					\\[-2.5ex]\hline\\[-2ex]
					\gridOne{7}{10}{484772512167266688498399632918196}{\approx 2.93}{8:12:59}{}{6465}{}{4}
					\\[-2.5ex]\hline\\[-2ex]
					\gridOne{7}{11}{3131521959869770128138491287826065904}{\approx 2.97}{27:42:55}{}{14870}{}{4}
					\\[-2.5ex]\hline\\[-2ex]
					\gridOne{7}{12}{20443767611927599823217291769468449488548}{\approx 3.01}{67:06:41}{}{34752}{}{4}
					\\[-2.5ex]\hline\\[-2ex]
					\gridOne{8}{8}{332633840844113103751597995920}{\approx 2.89}{6:00:49}{}{6171}{}{4}
					\\[-2.5ex]\hline\\[-2ex]
					\gridOne{8}{9}{9369363517501208819530429967280708}{\approx 2.96}{32:49:11}{}{19071}{}{4}
					\\[-2.5ex]\hline\\[-2ex]
					\gridOne{8}{10}{269621109753732518252493257828413137272}{\approx 3.02}{139:58:01}{}{75220}{}{4}
				\\[-2.5ex]\hline
			\end{tabular}
			\caption{Grid of $n$ by $m$ with $k$ onion layers.}
			\label{c-tri:sections:experiments:table:grid:1}
	\end{sidewaystable}
	
	\begin{table}
		\centering\small
			\begin{tabular}{cccccc}
				& & & \multicolumn{2}{c}{{\color{black} \# Sub-problems}}\\
				\\[-2.5ex]\cline{4-6}\\[-2ex]
				$n$ & $m$ & $k$ & {\color{black} sn-paths} & {\color{black} Base} & {\color{black} Exp}\\
				\\[-2.5ex]\hline\\[-2ex]
					\gridTwo{6}{6}{69908}{\approx 1.36}{}{}{3}{\approx 1.03}
					\\[-2.5ex]\hline\\[-2ex]
					\gridTwo{6}{7}{207193}{\approx 1.33}{}{}{3}{\approx 1.09}
					\\[-2.5ex]\hline\\[-2ex]
					\gridTwo{6}{8}{465416}{\approx 1.31}{}{}{3}{\approx 1.12}
					\\[-2.5ex]\hline\\[-2ex]
					\gridTwo{6}{9}{1002029}{\approx 1.29}{}{}{3}{\approx 1.15}
					\\[-2.5ex]\hline\\[-2ex]
					\gridTwo{6}{10}{1883205}{\approx 1.27}{}{}{3}{\approx 1.17}
					\\[-2.5ex]\hline\\[-2ex]
					\gridTwo{6}{11}{3409331}{\approx 1.25}{}{}{3}{\approx 1.19}
					\\[-2.5ex]\hline\\[-2ex]
					\gridTwo{6}{12}{5705962}{\approx 1.24}{}{}{3}{\approx 1.21}
					\\[-2.5ex]\hline\\[-2ex]
					\gridTwo{6}{13}{9417222}{\approx 1.22}{}{}{3}{\approx 1.22}
					\\[-2.5ex]\hline\\[-2ex]
					\gridTwo{6}{14}{14471156}{\approx 1.21}{}{}{3}{\approx 1.24}
					\\[-2.5ex]\hline\\[-2ex]
					\gridTwo{6}{15}{22201708}{\approx 1.20}{}{}{3}{\approx 1.25}
					\\[-2.5ex]\hline\\[-2ex]
					\gridTwo{6}{16}{32491047}{\approx 1.19}{}{}{3}{\approx 1.26}
					\\[-2.5ex]\hline\\[-2ex]
					\gridTwo{6}{17}{46979052}{\approx 1.18}{}{}{3}{\approx 1.27}
					\\[-2.5ex]\hline\\[-2ex]
					\gridTwo{7}{7}{972496}{\approx 1.32}{}{}{4}{\approx 0.88}
					\\[-2.5ex]\hline\\[-2ex]
					\gridTwo{7}{8}{3527752}{\approx 1.30}{}{}{4}{\approx 0.93}
					\\[-2.5ex]\hline\\[-2ex]
					\gridTwo{7}{9}{10558836}{\approx 1.29}{}{}{4}{\approx 0.97}
					\\[-2.5ex]\hline\\[-2ex]
					\gridTwo{7}{10}{25013282}{\approx 1.27}{}{}{4}{\approx 1}
					\\[-2.5ex]\hline\\[-2ex]
					\gridTwo{7}{11}{55453561}{\approx 1.26}{}{}{4}{\approx 1.02}
					\\[-2.5ex]\hline\\[-2ex]
					\gridTwo{7}{12}{109901193}{\approx 1.24}{}{}{4}{\approx 1.04}
					\\[-2.5ex]\hline\\[-2ex]
					\gridTwo{8}{8}{14569428}{\approx 1.29}{}{}{4}{\approx 0.99}
					\\[-2.5ex]\hline\\[-2ex]
					\gridTwo{8}{9}{50333235}{\approx 1.27}{}{}{4}{\approx 1.03}
					\\[-2.5ex]\hline\\[-2ex]
					\gridTwo{8}{10}{122283519}{\approx 1.26}{}{}{4}{\approx 1.06}
				\\[-2.5ex]\hline
			\end{tabular}
			\caption{Number of sub-problems generated by the configurations (entry-wise) presented in Table~\ref{c-tri:sections:experiments:table:grid:1}.}
			\label{c-tri:sections:experiments:table:grid:2}
	\end{table}
}

The experiments turned out to be what we had expected, namely, generally worse behavior as the number of onion layers increases, since the number of triangulations should certainly increase with the number of onion layers. The experiments show however that \emph{both} algorithms are counting triangulations by generating far fewer sub-problems: Having a glimpse at the number of sub-problems in Tables~\ref{c-tri:sections:experiments:table:3cl:1} to~\ref{c-tri:sections:experiments:table:grid:2}, each looks as something of the sort $\sqrt{|\F_{T}(\setp)|}$, which was already reported in~\cite{ray-seidel} for the ray-seidel algorithm.

The ray-seidel algorithm showed a consistent behavior across all experiments. This algorithm lived up to its expectations, it turned out to be simply the fastest algorithm, but this came with the price of being \emph{very} resource-consuming. We can see in the tables that the resources the algorithm uses increase very fast, at that rate we could say that increasing RAM to a couple of Terabytes will not really allow us to run the algorithm on significantly larger set of points. However, there are other techniques we could use to alleviate this situation, we could for example decide to store only a subset of the produced sub-problems and re-compute a sub-problem whenever needed. Since apparently computing sub-problems is very fast, we could expect that this method does not severely blow up the running time.

Now turning to sn-paths, the algorithm really performed best for three convex layers, see Tables~\ref{c-tri:sections:experiments:table:3cl:1} and~\ref{c-tri:sections:experiments:table:3cl:2}. For those configurations the algorithm allowed us to go up to 80 points in a ``reasonable'' amount of time without exhausting the RAM, which is almost twice as much as the ray-seidel algorithm allowed. In this regard we believe that increasing computational power and resources, say 512 GB of RAM, could allow us to go somewhere near 160 points. The ray-seidel algorithm would get nowhere close to this number of points \emph{per se}. This ``nice'' behavior can also be seen in Tables~\ref{c-tri:sections:experiments:table:grid:1} and~\ref{c-tri:sections:experiments:table:grid:2} with grids having three and four onion layers, however, grids are believed to have far less triangulations than sets of points in general position. This is also supported by the experiments. For fewer than three onion layers the algorithm gets better, so they are really not an issue. Also, since the running time of sn-paths can be expressed as $n^{d\cdot k}$, for some positive $d$, the idea of column Exp in the tables was to see whether that value comes out roughly as a constant for the same values of $k$, however, the data set seems to be small to show this. We believe that the right value should be $3\leq d\leq 4$. Finally, by increasing the number of onion layers, Tables~\ref{c-tri:sections:experiments:table:square:1} to~\ref{c-tri:sections:experiments:table:nestedTri:2}, we can see how the behavior of sn-paths quickly deteriorates, in all aspects.

The conclusions of the experiments really suggest themselves. In the ``low'' end, up to 25 points, any algorithm will do but ray-seidel is the fastest. In the ``high'' end it really seems that sn-paths is a better choice due to its smaller memory footprint, so up to 6--7 onion layers we would stick with it, but beyond that number of onion layers we would consider ray-seidel a better option.

\section{Conclusions}\label{c-tri:sections:conclusionsTripleP}

In this paper we have presented algorithms to count triangulations, crossing-free matchings and crossing-free spanning cycles of a given set of points $\setp$. All algorithms use the onion layers of $\setp$ and the divide-and-conquer paradigm.

The algorithm to count triangulations presented in this paper has the best provable worst-case running time as of this writing, $O^{*}(3.1414^{n})$. We consider important to note that configurations of points having from $\Omega\left(3.464^{n}\right)$, see~\cite{DBLP:journals/jct/SantosS03a}, to $\Omega\left(8.65^{n}\right)$, see~\cite{DBLP:conf/stacs/DumitrescuSST11}, triangulations are known. Thus the algorithm presented in this paper counts triangulations faster than enumeration algorithms in at least those cases. Also, this algorithm has polynomial-time instances whenever the number of onion layers of the given set of points is constant. As we saw in the experiments we showed, for up to 3 onion layers this algorithm outperforms the algorithm of~\cite{ray-seidel}, which is reported to be extremely fast in practice. At this point the most interesting open questions w.r.t.~counting triangulations are the following: (\oldstylenums{1}) Is it possible to \emph{always} count triangulations in polynomial time? or is this problem \#P-complete? (\oldstylenums{2}) Is it true that \emph{every} set of $n$ points, $n$ being large enough, spans at least $\Omega\left(3.464^{n}\right)$ triangulations? If this is true, then the algorithm we presented in this paper \emph{always} counts triangulations in $o\left(|\F_{T}(\setp)|\right)$. 

Speaking about counting other kinds of crossing-free structures, we showed again two algorithms. These algorithms could be seen more as a framework for counting essentially ``every'' kind of crossing-free structures, since it depends on a labeling scheme, which is the hardest part of the algorithm to come up with. This ``framework'' implies algorithms with running times of the sort $n^{O(k)}$, where $k$ is the number of onion layers of the given set of points, which again, for fixed $k$ implies polynomial time. Algorithms like these were not known before for this kind of problems. This gives a partial answer to Problem 16 of The Open Problems Project, which asks whether $|\F_{C}(\setp)|$ can \emph{always} be computed in polynomial time, see~\cite{topp}. These counting algorithms also allow us to generate crossing-free matchings and spanning cycles uniformly at random. The latter being a problem that has attracted the attention of researchers for almost 20 years, in the form of generating random simple polygons on $\setp$, which is nothing but a crossing-free spanning cycle of $\setp$. Since our algorithms are based on the divide-and-conquer paradigm, we can adapt the method explained in~\cite{DBLP:conf/compgeom/Aichholzer99} to produce such random structures, once the counting has been done. Other methods to generate random simple polygons without having to count are known, for example, in~\cite{DBLP:conf/cccg/AuerH96} many heuristics for polygons are presented. There the authors reported that (uniform) random generation can be done in polynomial time when the random polygon is star-shaped, but in the general case the algorithms therein presented are either unpractical or unable to generate uniformly at random. 

It is worth noting that although we could have tried to come up with an annotation scheme for pseudo-triangulations, and thus we could have obtained an algorithm to count pseudo-triangulations using the onion layers, the resulting algorithm would have had a running time of the sort $n^{O(k)}$. This running time is in general not better than that of the algorithm presented in~\cite{sweep-line} for pseudo-triangulations, at least from the theoretical point of view, since for the latter we have $O^{*}\left(c^{n}\right)$, for some constant $c\in\R$, while for the former $k$ can get linearly large. The most important open problem here is whether the number of matchings and spanning cycles can \emph{always} be computed in polynomial time.

We also showed a hardness result of a very particular instance of the problem of counting triangulations exactly, namely the {\sc Restricted-Triangulation-Counting-Problem} (RTCP). We showed that this problem is W[2]-hard if the parameter is the number of onion layers of the set of points it is defined on. The algorithm for counting triangulations of Theorem~\ref{c-tri:theorems:triple-paths} needs little to no modifications to be run on instances of RTCP, and the separation FPT $\neq$ W[2] is widely believed, so we can still hope that, by exploiting structural properties of triangulations, and also of other crossing-free structures, we can obtain FPT algorithms for the counting problems studied in~\S~\ref{c-tri:sections:triple-paths} and~\S~\ref{c-tri:sections:triangular-paths}. Thus, another interesting question at this moment is: Do those counting problems belong to FPT or not?

Finally, we showed experimental results comparing the algorithms for counting triangulations of Theorem~\ref{c-tri:theorems:triple-paths} and the algorithm of~\cite{ray-seidel}. Those experiments give a rough idea of what to expect when running each one of those algorithms on real configurations of points. It would be very interesting to see a hybrid algorithm that uses the sn-path and the ray-seidel algorithms, if possible. That algorithm could combine the small memory footprint of the sn-path algorithm with the fast execution of the ray-seidel algorithm. This could allow us to solve larger sets of points.

\section{Acknowledgement}

We thank Raimund Seidel for valuable feedback and interesting discussions.

\clearpage

\bibliographystyle{ieeetr}
\bibliography{bibliography-crossing-free-structures}
\end{document}